\documentclass{article}
\usepackage{iclr2027_conference,times}

\usepackage{microtype}
\usepackage{graphicx}
\usepackage{subcaption}
\usepackage{booktabs}
\usepackage{tabularx}
\usepackage{wrapfig}
\usepackage{amsmath,amssymb,mathtools}
\usepackage{amsthm}
\usepackage{xcolor}
\usepackage{hyperref}
\usepackage{url}
\usepackage[capitalize,noabbrev]{cleveref}

\newtheorem{proposition}{Proposition}
\newtheorem{corollary}{Corollary}
\newtheorem{assumption}{Assumption}
\newtheorem{remark}{Remark}

\newcommand{\adv}{\textit{adv}}

\title{Near-Duplicate Families Break Exact-Record Membership Inference}
\author{Yiyong Liu\textsuperscript{1}, Jiayang Liu\textsuperscript{2}, Yixin Tan\textsuperscript{3}, Lu Sun\textsuperscript{4}, Rui Wen\textsuperscript{3} \\
\textsuperscript{1}CISPA Helmholtz Center for Information Security \\
\textsuperscript{2}Nanyang Technological University \\
\textsuperscript{3}Institute of Science Tokyo \\
\textsuperscript{4}Tohoku University
}
\iclrfinalcopy 

\begin{document}
\maketitle

\begin{abstract}
Membership inference (MI) asks whether a specific record appeared in a model's training set and is increasingly used as evidence for data provenance and copyright auditing. These applications require determining whether the exact queried record was used for training, rather than merely whether the model was exposed to similar content. Making this distinction is challenging because web-scale datasets naturally contain near-duplicates, including syndicated articles, mirrored pages, and lightly modified images.
We show that this creates a fundamental confound for standard MI. A clean-reference audit typically calibrates membership against a null in which neither the queried record nor its near-duplicate family is present. In deployment, however, the queried record may be absent while a non-identical family member was used for training. We introduce a four-world audit that independently varies exact-record inclusion and family presence to separate these cases.
Natural near-duplicate families cause severe false attribution. On CC-News, a clean-reference LiRA auditor labels 99.70\% of family-present exact non-members as members at 1.00\% false-positive rate. This failure persists across alternative scores, model architectures, and executed deduplication and retraining. Controlled interventions further reveal that the effect depends on the learning objective. In classification, faithful families largely substitute for the exact record, reducing exact-given-family inference to near chance. In autoregressive language modeling, the exact sequence retains a detectable residual, while family presence still confounds clean-reference decisions.
These results show that positive model-only membership evidence may establish family-level exposure without establishing exact-record provenance.
\end{abstract}

\section{Introduction}
\label{sec:intro}

Membership inference (MI) asks whether a specific record appeared in a model's training set~\citep{SSSS17,YGFJ18,CCNSTT22,LZBZ22,WLBZ24}.
While initially designed as a privacy attack, MI is increasingly repurposed as forensic evidence in copyright disputes and data provenance claims~\citep{MYP21,MSFM24,EBCMD26}.
These applications demand a rigorous standard of proof: the audit must identify the \emph{exact} disputed record rather than merely related content.
Yet, web-scale corpora inherently blur this distinction through syndicated news, mirrored pages, and minor edits~\citep{LINZECC22,KWR22}.
Standard MI ignores this reality by relying on a ``clean'' null hypothesis that tests exact inclusion against an idealized absence condition where no related content exists.
Consequently, if a model assigns a member-like score due to the presence of non-identical near-duplicates, standard auditors risk triggering a false positive, confounding exact-record provenance with family-level exposure.

We demonstrate that this vulnerability severely compromises exact-record attribution in practice.
On CC-News, natural near-duplicate families cause a clean-reference auditor to falsely label $99.70\%$ of family-present genuine non-members as members after full fine-tuning of a Pythia-2.8B model~\citep{BSABOHKPPRSSW23}.
The effect persists across alternative membership scores, including Min-K\%~\citep{SAXHLBCZ24}, Min-K++~\citep{ZSYOKZYL25}, and EZ-MIA~\citep{ISC26}. It also transfers to different model and fine-tuning settings and persists after an executed MinHash deduplication-and-retraining pipeline.
Similarly, natural WikiText-103 siblings independently increase the false-member rate from $1.40\%$ to $81.00\%$.
In both corpora, no-sibling control records remain calibrated at roughly their nominal rates ($1$--$2\%$). This isolates family presence, rather than general threshold miscalibration, as the source of the failure.
A standard positive MI score can therefore provide evidence of broader content exposure without proving exact-record inclusion.

To formally isolate what membership signal, if any, survives this confounding, we introduce a four-world audit (\cref{fig:fourworld}).
This framework independently varies exact-record inclusion ($E$) and near-duplicate family presence ($F$), separating standard exact-record inference from exact-given-family inference.
Applying this audit in controlled settings reveals a fundamental, objective-dependent split in the underlying mechanics.
In classification tasks, faithful near-duplicate families act as substitutes for the exact example.
This redundancy drives exact-record detectability down to random chance~\citep{F20}.
Conversely, autoregressive language modeling retains an additive, sequence-specific effect: a language model preserves a unique, detectable signature of the exact text even when its family is present~\citep{CTWJHLRBSEOR21,FZ20}.
This surviving signature nevertheless cannot save a standard audit: family presence alone inflates the score past any clean-calibrated threshold, triggering a false positive before the exact-sequence residual is ever weighed.
Model-only audits should therefore declare their unit of inference and abstain on exact-record verdicts whenever near duplicates cannot be excluded, while still reporting family-level exposure, which remains reliably detectable (AUC $0.98$ on CC-News).

\begin{figure*}[t]
  \centering
  \begin{subfigure}[t]{0.315\linewidth}
    \centering
    \includegraphics[width=\linewidth]{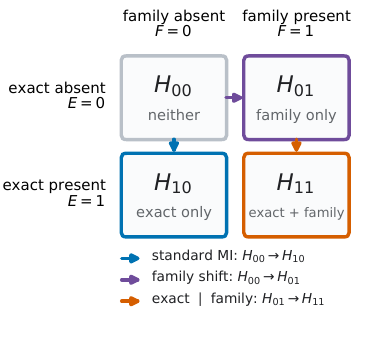}
    \caption{Audit design.}
  \end{subfigure}\hfill
  \begin{subfigure}[t]{0.315\linewidth}
    \centering
    \includegraphics[width=\linewidth]{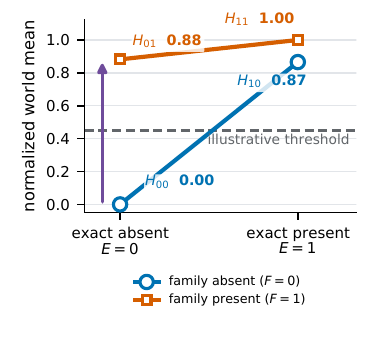}
    \caption{Classification: substitution.}
  \end{subfigure}\hfill
  \begin{subfigure}[t]{0.315\linewidth}
    \centering
    \includegraphics[width=\linewidth]{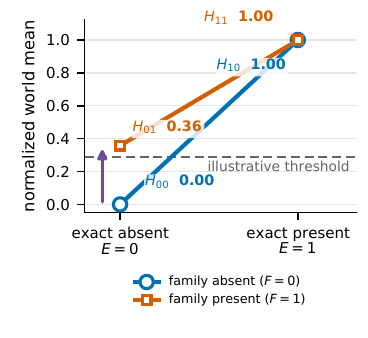}
    \caption{LM: residual exact signal.}
  \end{subfigure}
  \caption{The four-world audit isolates exact-record evidence from family presence. \textbf{(a)} Standard MI (blue, $H_{00}\!\to\!H_{10}$) tests exact inclusion against a clean absence world, ignoring the family-only shift (purple, $H_{00}\!\to\!H_{01}$); our audit instead toggles the exact record with the family already present (orange, $H_{01}\!\to\!H_{11}$). \textbf{(b)} Under classification the family substitutes, flattening the orange trajectory. \textbf{(c)} Under autoregressive modeling the exact record keeps a large additive residual, while the family shift alone (purple) already crosses the illustrative threshold (dashed). Panels (b,c): min--max-normalized pooled world means (matched AG-News/GPT-2 shadows); lines connect binary $E$ states.}
  \label{fig:fourworld}
\end{figure*}

This paper makes three primary contributions:
\begin{itemize}
  \item We formulate exact-record membership as a composite hypothesis with near-duplicate family presence as a critical nuisance variable, and introduce a four-world intervention that cleanly separates exact-record evidence from family-level evidence.
  \item We empirically demonstrate severe natural false attribution across multiple large-scale text corpora, showing this vulnerability survives executed deduplication pipelines and transfers across different models and attack formulations.
  \item We identify a fundamental split in learning dynamics between substitution under classification and exposure-additive memory under language modeling, establishing that while exact provenance is empirically unidentifiable in these settings, family presence remains a robust and defensible conclusion.
\end{itemize}

\section{Related Work}
\label{sec:related}

\paragraph{Membership inference and attribution.}
Membership inference originated as a privacy attack closely tied to model overfitting~\citep{SSSS17,YGFJ18}.
Later work showed how little access the attack needs and how far it reaches: membership leaks through hard labels alone~\citep{LZ21}, sharpens when the attacker recovers training dynamics through distilled loss trajectories~\citep{LZBZ22}, tracks architectural choices such as multi-exit networks~\citep{LLHYBZ22}, and persists even when adaptation happens in context rather than in model weights~\citep{WLBZ24}.
Large-scale, low-FPR formulations such as LiRA~\citep{CCNSTT22} and RMIA~\citep{ZLS24}, alongside LLM-specific metrics like Min-K\%~\citep{SAXHLBCZ24,ZSYOKZYL25,ISC26}, have since sharpened its statistical power.
As these tools are increasingly repurposed for dataset inference and copyright analysis~\citep{MJPD24,MYP21,MSFM24,EBCMD26}, the validity of their underlying assumptions has come under scrutiny.
While recent work highlights the calibration obstacle of unsamplable production nulls~\citep{ZDKT25}, we identify a more fundamental flaw that persists even with perfect retraining access: the ``absence'' of a queried record is a composite state containing both family-absent and family-present conditions, which severely confounds exact attribution.

\paragraph{Related records and memorization.}
Prior work establishes that memorization is fundamentally linked to data distribution~\citep{FZ20,F20,CTWJHLRBSEOR21}, with neighboring non-members often receiving member-like predictions~\citep{RL22}.
Furthermore, ``privacy onion'' and counterfactual memorization frameworks demonstrate that a record's leakage is deeply dependent on its surrounding training neighborhood~\citep{CJZPTT22,ZILJTC23}, motivating standard industry practices like corpus deduplication to mitigate extraction risks~\citep{KWR22,LINZECC22}.

\paragraph{Manipulation and defenses.}
Optimized poisons can force arbitrary MI decisions~\citep{MHLMFJPC25,BJLMSC25}, and active attacks like Truth Serum and Chameleon amplify leakage through mislabeled copies or adaptive poisoning~\citep{TSJLJHC22,CSOU24}.
Trainer-side defenses instead mitigate leakage through differential privacy or adversarial regularization~\citep{ACGMMTZ16,NSH18,TMSSNHM22,CYF22,JSBZG19}.
Unlike these active interventions, the confounding effect we evaluate arises natively without an adversary, and any defensive interpretation of our controlled experiments remains strictly secondary.

\section{Near-Duplicate Families Confound Exact-Record Inference}
\label{sec:bg}
\label{sec:method}

\subsection{The four-world audit framework}

Standard MI treats membership as a binary question: whether the exact record was used for training.
With near-duplicates, however, the observed membership signal may come from either the exact record or related family members.
We therefore separate these two factors and use a four-world audit framework to distinguish three questions: standard exact-record inference, family-presence inference, and exact-record inference when the family is already present.

Let $f$ be a model trained on dataset $D$, and let $x$ be the exact record whose membership is being audited.
Let $E\in\{0,1\}$ indicate whether $x$ is present in $D$, and let $F\in\{0,1\}$ indicate whether a non-identical member of its near-duplicate family is present.
The pair $(E,F)$ defines four worlds, $H_{00}$, $H_{10}$, $H_{01}$, and $H_{11}$.

Standard MI compares $H_{10}$ with $H_{00}$.
Exact-given-family inference compares $H_{11}$ with $H_{01}$ and asks whether $x$ adds detectable evidence after its family is already present.
Family-presence inference compares $H_{01}$ with $H_{00}$ and asks whether the family itself leaves detectable evidence.
When $F$ is unknown, exact-record inference has the composite hypotheses
\begin{equation}
  \mathcal H_0^E=\{H_{00},H_{01}\},
  \qquad
  \mathcal H_1^E=\{H_{10},H_{11}\}.
  \label{eq:composite}
\end{equation}
A clean-reference test calibrated on $H_{00}$ controls false positives for only one component of the null.

\begin{proposition}[Clean-null calibration]
Let $A$ be any rejection region with $P_{00}(A)\leq\alpha$.
Its type-I error for the composite exact-absence null is
\begin{equation}
  \sup_{F\in\{0,1\}}P_{0F}(A)=\max\{P_{00}(A),P_{01}(A)\},
  \label{eq:worstcase-fpr}
\end{equation}
which is not bounded by $\alpha$ and may equal one.
If family presence among exact non-members has prevalence $\pi$, the deployment FPR is $(1-\pi)P_{00}(A)+\pi P_{01}(A)$.
\end{proposition}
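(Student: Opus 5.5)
The plan has three parts, each routine: identify the worst-case error over the composite null, exhibit a model where that error equals one, and average over the prevalence of $F$.

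\emph{Worst-case type-I error.} The type-I error of a rejection region $A$ for the composite null $\mathcal H_0^E=\{H_{00},H_{01}\}$ in \eqref{eq:composite} is, by definition, its worst-case rejection probability over the null's components. The component index is the nuisance variable $F$, which takes only the values $0$ and $1$. The supremum is therefore a maximum over two numbers, namely $\max\{P_{00}(A),P_{01}(A)\}$, which is \eqref{eq:worstcase-fpr}.

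\emph{Not bounded by $\alpha$, and possibly equal to one.} The hypothesis $P_{00}(A)\le\alpha$ places no constraint on $P_{01}(A)$, since $P_{01}$ is the law of the audit score under a different training distribution. To show the bound can fail completely, I will give an explicit construction. Let the score be $s(f,x)=\mathbf 1[\text{some element of the family of } x \text{ is in } D]$, or a smoothed version such as a deterministic memorization indicator. Take $A=\{s\ge 1\}$. Under $H_{00}$ no family member is present, so $P_{00}(A)=0\le\alpha$. Under $H_{01}$ a family member is present, so $P_{01}(A)=1$.

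A more model-flavored alternative is a Gaussian score model, $s\sim\mathcal N(0,1)$ under $H_{00}$ and $s\sim\mathcal N(\mu,1)$ under $H_{01}$. The threshold $z_{1-\alpha}$ then gives $P_{01}(A)=\Phi(\mu-z_{1-\alpha})$, which tends to $1$ as $\mu\to\infty$. This shows the error can approach one, while the degenerate construction attains it exactly.

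The only subtlety is that ``may equal one'' is an existence claim. I therefore need to exhibit a valid instance rather than argue that it holds for all $A$. I expect this to be the only step needing care, and the indicator construction settles it.

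\emph{Deployment FPR.} Condition on $E=0$ and treat $F$ as a Bernoulli$(\pi)$ random variable. By the law of total probability,
\[
P(A\mid E=0)=P(F=0\mid E=0)\,P_{00}(A)+P(F=1\mid E=0)\,P_{01}(A)=(1-\pi)P_{00}(A)+\pi P_{01}(A).
\]
This requires $P_{0F}$ to be the conditional law of the score given $(E=0,F)$, which is how the four worlds are defined. As a remark, this quantity is at most $\alpha+\pi(P_{01}(A)-\alpha)$ and tends to $P_{01}(A)$ as $\pi\to1$, which connects it back to the worst-case bound.
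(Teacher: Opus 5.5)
Your proposal is correct and follows the paper's route. The size over $\mathcal H_0^E$ is the maximum over its two components, and the deployment FPR is the $\pi$-weighted average given by total probability. The paper only remarks that calibration on $H_{00}$ leaves $P_{01}(A)$ unconstrained, so your explicit witnesses for the ``may equal one'' clause (the family-presence indicator attaining $P_{01}(A)=1$, and the Gaussian shift with $P_{01}(A)=\Phi(\mu-z_{1-\alpha})\to 1$) make that clause concrete rather than changing the argument.
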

\begin{proof}
Both statements follow directly by maximizing, or averaging, over the two components of $\mathcal H_0^E$.
Calibration on $H_{00}$ constrains only the first term, leaving the error rate on $H_{01}$ unbounded.
\end{proof}

The proposition is distribution-free: it does not depend on LiRA, Gaussian scores, or a particular similarity metric.
The empirical question is whether $P_{01}(A)$ differs materially from $P_{00}(A)$; our natural studies show that it can.

The four comparisons separate questions that a binary MI benchmark conflates.
Standard MI measures whether $x$ has an effect when no related content is present, but it does not identify the source of a member-like score in a corpus where $F$ is unknown.
Exact-given-family inference is the relevant controlled comparison for exact provenance: both hypotheses contain the same family and differ only in $x$.
Conversely, family-presence inference deliberately removes $x$ and asks whether its surrounding content is visible.
This distinction also prevents a low exact-given-family AUC from being misread as content privacy; the family itself may remain highly detectable.

For a score with four-world means $\mu_{EF}$ and variances $\sigma_{EF}^2$, we define the exact record's raw conditional effect and its standardized effect as:
\begin{equation}
  \delta_E^F=\mu_{1F}-\mu_{0F}, \qquad
  d_E^F=\frac{\delta_E^F}{\sqrt{(\sigma_{1F}^2+\sigma_{0F}^2)/2}}.
  \label{eq:factorial}
\end{equation}
The factorial interaction $I = \delta_E^1 - \delta_E^0$ measures how family presence shifts the exact record's absolute contribution.
The ratio of the standardized gaps, $R_{\mathrm{exact}} = d_E^1 / d_E^0$, measures the retained fraction of the exact-record signal.
A ratio near zero indicates near-total substitution, while a ratio near one indicates approximate additivity.

\subsection{Core failure modes and baseline auditors}

We distinguish two distinct failure modes under near-duplicate exposure:
\begin{itemize}
  \item \textbf{False attribution} occurs when the presence of the family alone ($H_{01}$) causes model scores to cross a threshold calibrated on clean data ($H_{00}$ vs. $H_{10}$), triggering a false positive for the absent exact record.
  \item \textbf{Erasure} occurs when the exact record's signal is completely overshadowed by the family, rendering the worlds $H_{11}$ and $H_{01}$ empirically indistinguishable even to an oracle auditor.
\end{itemize}

The auditor has black-box access to the target model and may train shadow models on reference datasets, but cannot inspect the deployed training set or assume knowledge of every variant in the near-duplicate family.
For classification, the primary score is the true-class logit margin.
For language modeling, it is negative suffix loss.
Our primary calibrated auditor is LiRA~\citep{CCNSTT22}, which fits parametric score densities on calibration shadows to evaluate the log-likelihood ratio:
\begin{equation}
  \ell_x(\phi) = \log p(\phi \mid E=1, F=f) - \log p(\phi \mid E=0, F=f).
  \label{eq:lira}
\end{equation}
The conditioning state $F=f$ defines our two primary auditor baselines: the \emph{clean-reference auditor} (calibrated assuming the family is absent, $F=0$) and the \emph{family-conditioned auditor} (calibrated knowing the family is present, $F=1$).
Standard clean-reference decision thresholds are chosen at a nominal false-positive rate $\alpha$ using clean OUT scores ($H_{00}$).
False attribution then occurs precisely because the actual deployment state shifts to $F=1$ (the family is present), blinding the clean-reference auditor to the difference.

\subsection{Evaluation protocol}

Our evaluation protocol follows a rigorous shadow-training pipeline.
Each condition trains a fresh set of shadow models while independently randomizing each target's exact membership.
Calibration shadows fit the conditional score densities and establish the decision thresholds; disjoint evaluation shadows are then used to measure false-positive and true-positive rates out of sample.
We report the raw Area Under the ROC Curve (AUC), the orientation-invariant advantage $\adv=2|\mathrm{AUC}-0.5|$, and the decision rule's positive likelihood ratio $\mathrm{LR}^{+}=\mathrm{TPR}/\mathrm{FPR}$ (computed as the bootstrap-median ratio over individual resamples to handle skewness).
Confidence intervals are estimated using a crossed bootstrap over both evaluation targets and shadow seeds.
To prevent selection bias, vulnerable targets are identified solely on a separate set of discovery shadows.

We evaluate this protocol across both text and vision modalities:
\begin{itemize}
  \item \textbf{Text corpora:} Near-duplicate families are mined naturally from CC-News and WikiText-103 using MinHash clustering over token $5$-grams with a Jaccard similarity threshold~\citep{LINZECC22,KWR22}.
  \item \textbf{Vision corpora:} Families are systematically constructed on CIFAR-10/100 using same-label spatial and geometric transformations of the target images.
\end{itemize}
Appendix~\ref{app:expmap} provides a complete map of every reported quantity to its exact construction, score, subset, and estimator.

\paragraph{Experimental scope.}
Our evaluation encompasses multiple modalities (vision, text), architectures (ResNet-9~\citep{HZRS16}, VGG-11~\citep{SZ15}, GPT-2, Pythia-2.8B~\citep{BSABOHKPPRSSW23}, Qwen3.5-2B~\citep{T25}), and datasets (CIFAR-10/100, AG-News, WikiText-103~\citep{MXBS17}, CC-News), requiring the training of over $1{,}500$ shadow models in total.
Every intervention holds the background data and training recipe fixed, isolating the exact-record signal by independently randomizing membership.
Comprehensive data construction, optimization, and compute configurations for each setting are deferred to the appendix.

\section{Natural Families Cause Severe False Attribution}
\label{sec:natural}

\subsection{Widespread false attribution on CC-News}
\label{sec:scaleup}

We evaluate false attribution in the wild by mining $200$ natural news-syndication families and $100$ no-sibling controls from CC-News.
The targets span broad token-$5$-gram Jaccard similarities ($[0.50, 1.00)$), ensuring the audit tests non-identical, natural variation.
Across $128$ pythia-2.8B shadows, we establish a highly vulnerable baseline: the family-absent exact-record advantage is $0.99$ (averaged per target; the pooled score AUC in \cref{tab:fourworld} is $0.87$), meaning standard MI succeeds almost perfectly in a clean world.
To rigorously isolate the family effect, we score a fixed suffix of each passage, remove byte-identical copies, and evaluate against the no-sibling controls processed through the identical calibration pipeline.
Because exact membership is independently randomized in every shadow, any pretraining contamination remains strictly symmetric between the IN and OUT worlds, isolating the observed failure entirely to the family-present contrast.

\paragraph{Severe false attribution.}
The clean-reference per-target LiRA auditor nevertheless labels $99.70\%$ of family-present non-members as members at a nominal $1.00\%$ FPR (\cref{fig:natural-verdict}); the rate remains $99.70\%$ at the $5.00\%$ operating point used in \cref{tab:natural-robustness}.
Its positive likelihood ratio falls from approximately $62.00$ in the clean world to $1.00$, because TPR and FPR become nearly equal.
In the matched pooled control, no-sibling targets have only a $1.20\%$ false-positive rate at that operating point.
The effect is thus a family-specific loss of exact attribution rather than indiscriminate miscalibration.

\begin{figure*}[t]
  \centering
  \begin{subfigure}[t]{0.315\linewidth}
    \centering
    \includegraphics[width=\linewidth]{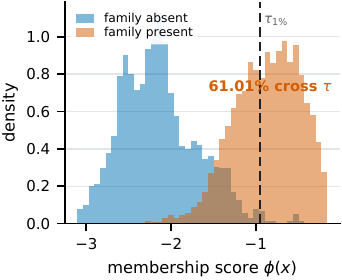}
    \caption{Non-member score shift.}
  \end{subfigure}\hfill
  \begin{subfigure}[t]{0.315\linewidth}
    \centering
    \includegraphics[width=\linewidth]{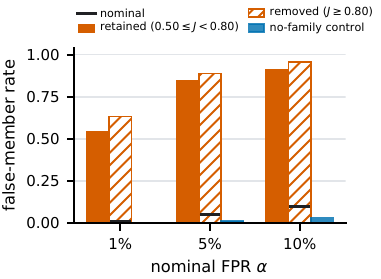}
    \caption{Jaccard-$0.8$ retention bands.}
  \end{subfigure}\hfill
  \begin{subfigure}[t]{0.315\linewidth}
    \centering
    \includegraphics[width=\linewidth]{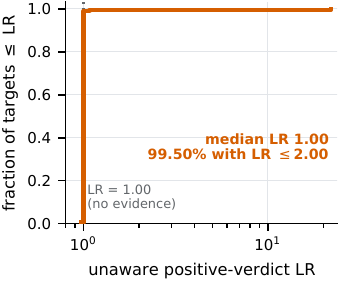}
    \caption{Per-target evidence.}
  \end{subfigure}
  \caption{Natural CC-News/pythia-2.8B verdicts. (a) Family presence shifts pooled non-member scores: $61.01\%$ cross a global clean $1.00\%$-FPR threshold (per-target calibrated FPR $99.70\%$). (b) False attribution stays high even in the Jaccard-$0.80$ dedup-retained band. (c) The unaware positive-verdict LR has median $1.00$; $199/200$ targets have LR $\leq2.00$.}
  \label{fig:natural-verdict}
\end{figure*}

\paragraph{Executed deduplication and retraining.}
We execute a Jaccard-$0.80$ MinHash deduplication pass, rebuild the corpus, and retrain rather than merely partitioning a fixed run by similarity.
Sub-threshold families retained by that pipeline still produce false-member rates of approximately $92.00\%$ and $99.00\%$ in the $[0.50,0.70)$ and $[0.70,0.80)$ bands at nominal $5.00\%$ FPR.
Deduplication reduces repeated content but cannot certify that every provenance-confounding family is absent.

\begin{wraptable}{R}{0.52\linewidth}
  \centering
  \footnotesize
  \renewcommand{\arraystretch}{1.06}
  \setlength{\tabcolsep}{1.3pt}
  \vspace{-4mm}
  \caption{CC-News robustness at nominal $5.00\%$ FPR. FP is the false-attribution rate on family-present exact non-members (no-sibling records for the control row). P/T denote pooled/per-target calibration.}
  \label{tab:natural-robustness}
  \begin{tabularx}{\linewidth}{@{}l>{\raggedright\arraybackslash}Xlr@{}}
    \toprule
    Model/run & Score (cal.) & Targets & FP \\
    \midrule
    Pythia/core & loss (P) & all & $88.00\%$ \\
                 & LiRA (T) & all & $99.70\%$ \\
                 & loss (P) & control & $1.20\%$ \\
    \addlinespace[1pt]
    Pythia/repeat & loss (P) & all & $98.00\%$ \\
                   & Min-K$_{20\%}$ (P) & all & $97.00\%$ \\
                   & Min-K++ (P) & all & $82.00\%$ \\
    Pythia/EZ & EZ-MIA (P) & all & $63.00\%$ \\
    Pythia/nbr. & neighbor (P) & all & $58.00\%$ \\
    \addlinespace[1pt]
    Qwen3.5 LoRA & loss (P) & $J\!<\!0.80$ & $37.00\%$ \\
    \addlinespace[1pt]
    Pythia/dedup-FT & loss (P) & $J\!\in\![0.50,0.70)$ & $92.00\%$ \\
                     & loss (P) & $J\!\in\![0.70,0.80)$ & $99.00\%$ \\
    \addlinespace[1pt]
    Pythia/semantic & loss (P) & all & $10.00\%$ \\
    \bottomrule
  \end{tabularx}
\end{wraptable}

\paragraph{Robustness across scores and environments.}
The devastating false-attribution effect is not an artifact of a specific model, score, or training run.
As detailed in \cref{tab:natural-robustness}, the failure persists across alternative LLM-specific auditing metrics (including Min-K$_{20\%}$~\citep{SAXHLBCZ24}, Min-K++~\citep{ZSYOKZYL25}, EZ-MIA~\citep{ISC26}, and neighborhood comparison~\citep{MMJSSB23}), where false-positive rates remain at least an order of magnitude above the nominal $5.00\%$ operating point.
It also survives transfer to a different architecture (Qwen3.5-2B LoRA) and withstands an executed Jaccard-$0.80$ deduplication and retraining pipeline.
In stark contrast, a boundary control using meaning-preserving but lexically disjoint semantic paraphrases drops the false attribution rate to $10.00\%$, twice the nominal $5.00\%$ but far below the natural lexical families' $99.70\%$.
In this single-exposure setting the vulnerability is therefore driven primarily by surface-level lexical similarity rather than by semantic overlap (\S\ref{sec:objective} shows a heavier fine-tuning regime where semantic families also shift the score).
Complete removed-band and alternative-operating-point results are deferred to Appendix~\ref{app:scaleup}.

\subsection{Replication on WikiText-103}
\label{sec:natdup}

We validate these findings on WikiText-103, fine-tuning $96$ GPT-2 shadows across $56$ targets with natural siblings and $32$ controls.
At a nominal $1.00\%$ FPR, natural siblings inflate the clean-reference false-member rate from $1.40\%$ to $81.00\%$, while no-sibling controls remain calibrated.
Crucially, a family-conditioned auditor retains near-perfect exact inference ($\adv\approx0.99$), confirming this failure is strictly false attribution rather than erasure.
With the exact passage absent, the family-present and family-absent worlds separate almost perfectly ($\adv=1.00$), demonstrating that these scores are highly reliable indicators of broader family exposure rather than exact provenance.
These studies show the composite null invalidates standard verdicts in the wild, motivating controlled interventions that isolate whether families substitute for or merely confound the exact record.

\section{Four-World Interventions Explain the Failure}
\label{sec:vision}

\subsection{Causal influence of faithful content}
\label{sec:trich}

To establish causality, we evaluate controlled same-label families on CIFAR-10 using $48$ shielded targets and $256$ target-disjoint bystanders, following standard shadow density estimation protocols~\citep{CCNSTT22}.
With exact membership randomized independently against a fixed family presence ($k{=}8$), faithful near-duplicates reduce vulnerable-target advantage from $0.55$ to $0.11$, a change of $-0.44$ (\cref{fig:trichotomy}).
In contrast, lower-fidelity reinforcement variants change advantage by only $-0.08$, while same-class density and random controls show null effects ($-0.03$ and $-0.05$).
Conflicting-label records instead increase advantage by $+0.24$ and reduce target accuracy.
Global accuracy remains stable, and the target-disjoint bystanders exhibit no measured average leakage change (Appendix~\ref{app:util}).
The intervention therefore depends strictly on reproducing target content, not merely adding records, changing class density, or increasing optimizer exposure.

\subsection{Substitution versus addition regimes}
\label{sec:verdict}

The four-world audit distinguishes standard exact-record evidence, family evidence, and the exact record's residual after conditioning on the family (\cref{tab:fourworld}).
Table entries use one pooled score-AUC estimator throughout and should not be interchanged with the per-target advantages reported elsewhere.
In classification, family presence is detectable but exact-given-family inference falls to near chance.
A family-conditioned auditor's TPR drops to its measured FPR and $\mathrm{LR}^{+}$ approaches $1.00$, with $88.00\%$ of the $24$ vulnerable targets falling below advantage $0.05$.
At a decision level (\cref{tab:vision-verdict}), the clean-reference auditor's FPR rises to $0.59$ on faithful-family non-members.
Recalibrating with the family removes this false-positive shift but cannot recover the exact bit, leaving both TPR and FPR at $0.11$.
Conflicting-label families provide a sign control by retaining strong separation, yielding $\mathrm{LR}^{+}=9.90$ under conflict-conditioned calibration and $\mathrm{LR}^{+}=5.40$ even under clean calibration.
This confirms that classification exhibits substitution, where faithful coverage renders the exact example redundant~\citep{FZ20,F20}.
Autoregressive language modeling behaves fundamentally differently.
CC-News family presence is strongly detectable, yet exact-given-family AUC remains $0.75$.
Because every exact exposure can continue reducing sequence loss, the exact record retains a residual.
The practical failure is nevertheless severe because the natural family alone crosses a boundary calibrated without families.
Thus, text models demonstrate additive confounding rather than classification-style erasure.

\begin{figure}[tbp]
  \centering
  \begin{minipage}[b]{0.49\linewidth}
    \centering
    \includegraphics[width=\linewidth,keepaspectratio]{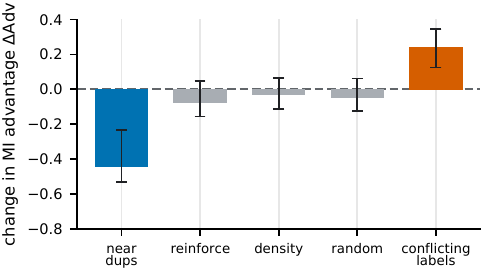}
  \end{minipage}\hfill
  \begin{minipage}[b]{0.49\linewidth}
    \centering
    \footnotesize
    \renewcommand{\arraystretch}{1.18}
    \setlength{\tabcolsep}{2.5pt}
    \begin{tabularx}{\linewidth}[b]{@{}Xrrr@{}}
      \toprule
      Calibration $\to$ evaluation & TPR & FPR & $\mathrm{LR}^{+}$ \\
      \midrule
      clean $\to$ clean & $0.37$ & $0.07$ & $5.20$ \\
      \addlinespace[1pt]
      copy-cal. $\to$ copies & $0.11$ & $0.11$ & $1.00$ \\
      clean $\to$ copies & $0.61$ & $0.59$ & $1.00$ \\
      \addlinespace[1pt]
      conflict-cal. $\to$ conflict & $0.56$ & $0.06$ & $9.90$ \\
      clean $\to$ conflict & $0.13$ & $0.02$ & $5.40$ \\
      \bottomrule
    \end{tabularx}
    \vspace*{2mm}
  \end{minipage}
  
  \begin{minipage}[t]{0.49\linewidth}
    \captionof{figure}{CIFAR-10 trichotomy. Near duplicates reduce advantage; lower-fidelity and disjoint controls remain null ($95\%$ CIs).}
    \label{fig:trichotomy}
  \end{minipage}\hfill
  \begin{minipage}[t]{0.49\linewidth}
    \captionof{table}{CIFAR-10 verdicts at nominal $5\%$ FPR; $\mathrm{LR}^{+}$: bootstrap medians. Copy/conflict calibration reproduces the evaluated family.}
    \label{tab:vision-verdict}
  \end{minipage}
  \vspace{-6mm}
\end{figure}

\begin{wraptable}{R}{0.52\linewidth}
  \centering
  \footnotesize
  \renewcommand{\arraystretch}{1.10}
  \setlength{\tabcolsep}{2pt}
  \vspace{-4mm}
  \caption{Complete four-world identification results (pooled score AUC). Columns are $E\mid F{=}0$: $H_{10}$ vs.\ $H_{00}$; $F\mid E{=}0$: $H_{01}$ vs.\ $H_{00}$; and $E\mid F{=}1$: $H_{11}$ vs.\ $H_{01}$. Chance is $0.50$; cls./AR denote classification/autoregressive modeling.}
  \label{tab:fourworld}
  \begin{tabularx}{\linewidth}{@{}>{\raggedright\arraybackslash}Xrrr@{}}
    \toprule
    Setting & Standard & Family & Residual \\
    \midrule
    CIFAR-10 cls. & $0.56$ & $0.69$ & $0.52$ \\
    CC-News AR & $0.87$ & $0.98$ & $0.75$ \\
    \addlinespace[1pt]
    AG-News cls. & $0.61$ & $0.62$ & $0.53$ \\
    AG-News AR & $1.00$ & $0.89$ & $1.00$ \\
    \bottomrule
  \end{tabularx}
    \vspace{-4mm}
\end{wraptable}

\subsection{The learning objective drives the split}
\label{sec:objective}

Because vision and text models differ in modality and architecture, we isolate the impact of the learning objective with a matched intervention on AG-News.
We hold the GPT-2 backbone, target records, family definitions (semantic paraphrase families sharing almost no $5$-grams with their targets; Appendix~\ref{app:semantic}), exposure levels, and four-world audit design strictly constant, varying only the training task between sequence classification and autoregressive modeling.
This single shift causes the exact-given-family AUC to jump from $0.53$ under classification to $1.00$ under autoregressive modeling, with the standardized surviving exact fraction rising from $0.27$ to $0.86$.
Because family presence remains independently detectable under both objectives (AUC $0.62$ and $0.89$), and these lexically disjoint families show that even purely semantic overlap shifts scores under multi-epoch fine-tuning, the divergence cannot be explained by the family simply becoming unlearnable.
Instead, this intervention definitively identifies the learning objective as the causal driver of the substitution-versus-addition split.
Finally, replacing natural CC-News families with meaning-preserving paraphrases (Jaccard $0.01$) reduces false attribution to $10.00\%$ while leaving exact-record inference fully intact (AUC $1.00$, \cref{tab:natural-robustness}), confirming that lexical near-duplication is the dominant confounding channel at single-exposure scale, though heavier fine-tuning extends it to semantic families.

\subsection{Score saturation and local invariance}
\label{sec:mechanism}
\label{sec:fidelity}
\label{sec:budget}
\label{sec:adaptive}

To formalize why the local score basin becomes invariant to exact membership, let $S$ denote the trained family of $x$, and let $\phi(x;D)$ be the expected audit score of query $x$ after training on $D$.
The exact record's discrete marginal contribution in the context of its family is
\begin{equation}
  \Delta_x(S)=\phi(x;D\cup S\cup\{x\})-\phi(x;D\cup S).
  \label{eq:marginal}
\end{equation}
Family records raise the queried target's score more strongly when the target is OUT than when it is IN, shrinking the standardized IN/OUT gap from $1.34$ to $0.10$ for near-duplicates and $0.04$ for exact copies.
Because highly faithful records cover more of the same local signal, their benefit saturates before $x$ is added, forcing $\Delta_x(S)$ to shrink.
This saturation extends across the target's entire neighborhood: querying fixed augmentations of each target reveals their collective IN/OUT gap collapses from $2.38$ without a family to $0.26$ with near-duplicates (\cref{fig:f11}).
The effect strictly follows fidelity (\cref{fig:f2}): exact copies erase the signal most strongly, while interpolating a family record toward an unrelated image smoothly removes the protection, explaining the density null (\cref{fig:f10}).
This erasure remains tightly localized to the target: the two nearest same-class training neighbors see their membership advantage change by only $0.01$ (with a confidence interval of $[-0.02,0.05]$).
This interpretation clarifies why low exact-record leakage does not imply low content leakage: the score shift is carried by $S$ and remains robustly observable during a family-presence test.

This mechanistic effect scales with the family budget $k$, appearing even with a single near-duplicate and saturating near $k=8$ (\cref{fig:f7}).
The substitution dynamics consistently transfer across different datasets and architectures (driving a $-0.76$ advantage reduction on CIFAR-100 and a $-0.30$ reduction on VGG-11), while the additive residual persists across controlled language models (\cref{fig:f3}).
Furthermore, the classification erasure regime withstands adaptive, family-aware auditors (\cref{fig:f4}).
When reference shadows are explicitly trained with the family present, advanced attacks (including full-covariance LiRA, augmentation differencing, explicit family-record queries, and RMIA) all remain near chance (AUC $0.47$--$0.53$; Appendix~\ref{app:adaptive}).
While this constitutes empirical robustness across black-box channels rather than an information-theoretic guarantee, it confirms a unified framework: fidelity dictates how strongly the family moves the target score, while the learning objective dictates whether that movement substitutes for or adds to the exact-record evidence.

\section{Model Evidence Relocates to the Content Family}
\label{sec:contentpresence}
\label{sec:baselines}

\subsection{High detectability of family presence}

Near-duplicate families redistribute rather than universally erase model evidence: what is lost in exact-record specificity is relocated to the broader content family.
With the exact target absent, family presence remains highly detectable (AUC $0.69$ in vision, $0.98$ in text).
Consequently, the very score shift that triggers a false-positive verdict for exact membership operates as a true-positive signal of family exposure.
Our controlled cluster-toggle experiment (\cref{fig:frontier}) makes this evidence relocation explicit: as family size $k$ sweeps from $1$ to $16$ in vision, exact-given-family advantage falls from $0.09$ to $0.01$, while family-presence advantage climbs from $0.12$ to $0.48$.
The isolated pythia-2.8B reference point illustrates the cross-objective contrast, preserving a detectable exact-sequence residual while simultaneously exposing the content family (AUC $0.98$).
Near-duplicate exposure thus shifts the identifiable unit of a model-only audit from the discrete record to its surrounding content family, without any change to the underlying scores.

\subsection{Limitations of composite-null calibration}
\label{sec:composite}

\begin{wrapfigure}{R}{0.52\linewidth}
  \centering
  \vspace{-4mm}
  \includegraphics[width=\linewidth]{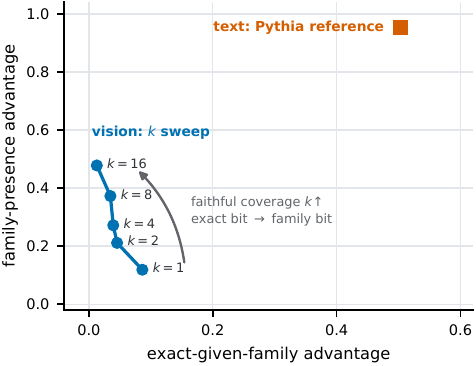}
  \caption{Information relocation. The blue path connects the five evaluated vision family sizes $k\in\{1,2,4,8,16\}$; the Pythia point is a single cross-objective reference, not a text sweep. Crossed-bootstrap intervals are reported in the appendix.}
  \vspace{-4mm}
  \label{fig:frontier}
\end{wrapfigure}
A natural defense is to calibrate the auditor against both components of the composite null hypothesis in \cref{eq:composite}.
While a worst-case threshold can control the false-positive rate across $H_{00}$ and $H_{01}$, this conservative correction cannot recreate the separation erased by the family, decimating power when the family-induced score shift is large.
In our vision experiments, the resulting composite exact test achieves only $0.53$ AUC, with statistical power matching the nominal $5.00\%$ false-positive rate (equivalent to random chance). Even an oracle four-world generalized likelihood-ratio (GLR) test reaches only $0.55$ AUC.
On CC-News text, the corresponding composite and oracle AUCs are $0.69$ and $0.74$.
Here, shifting from clean calibration ($H_{00}$) toward worst-case family calibration ($H_{01}$) reduces the worst-case false-positive rate from $0.88$ to $0.05$, but at an unacceptable cost: power on standard exact members ($H_{10}$) drops from $0.34$ to $0.00$, while power on family-present members ($H_{11}$) plummets from $0.99$ to $0.17$.
Crucially, this limitation is not an artifact of a simplistic monotone threshold: the oracle GLR test directly models all four worlds using perfect knowledge of the score distributions, yet remains bounded by these same low AUCs (Appendix~\ref{app:glrt}).
In practice, incomplete family knowledge presents a further, serious deployment obstacle.
If a text auditor identifies and accounts for only $1$ or $2$ members of an $8$-sibling family, the resulting false-positive rate remains highly elevated at $0.70$ or $0.19$, respectively.
The auditor recovers nominal calibration ($\alpha = 0.05$) only once at least half of the family ($4$ of $8$ siblings, FPR $0.06$) is reproduced and included in the reference shadows (Appendix~\ref{app:recall}).
Because exhaustively enumerating every syndicated variant, mirrored page, or minor edit across web-scale data is vastly more difficult than in this controlled best-case evaluation, robust composite-null calibration remains practically unfeasible.

\subsection{Scoped reporting and recommended abstention}
\label{sec:abstain}

When near-duplicate variants of a queried record cannot be definitively excluded from the training corpus, model-only evidence should never be treated as dispositive for proving exact-record data provenance.
Instead, we argue that a rigorous model-only audit must explicitly declare its unit of inference, distinguishing a discrete record from a broader lexical family.
To evaluate an exact-record claim, the auditor should construct both clean ($H_{00}$) and family-present ($H_{01}$) reference null distributions, disclosing separate false-positive and power rates across all four worlds ($H_{00}, H_{01}, H_{10}, H_{11}$).
If plausible near-duplicates cannot be exhaustively enumerated, or if the family-present null crosses the clean threshold, the only defensible outcome is a proactive abstention on exact membership, optionally accompanied by a positive family-presence finding.
This reporting rule preserves what the model actually reveals without silently converting general content exposure into provenance evidence.
When model-only audits must abstain, definitive exact provenance must rely on external records, such as training-set inspection, signed manifests, or system access logs.

This paradigm shift reflects the empirical boundaries of black-box audits.
Across nine tested scoring channels, including loss, logit, covariance, and neighborhood statistics, faithful classification families completely erase exact-record separation, whereas autoregressive text models preserve a detectable exact-sequence residual.
Crucially, this residual signal does not repair standard audits because family presence still falsely crosses the clean threshold with high probability.
The saturation mechanism is moreover dual use: the same data-side additions that reduce a record's exact-membership leakage can equally be used to weaken or fabricate model-only evidence, further underscoring why such verdicts are forensically invalid on web-scale corpora where natural, un-enumerated near-duplicates abound.

\section{Conclusion}
\label{sec:conclusion}

This paper exposes a fundamental vulnerability in using membership inference as a forensic tool for data provenance.
By showing that standard audits conflate exact-record inclusion with broader content-family exposure, we demonstrate that a positive model-only score is an invalid proof of individual copyright or licensing claims.
The four-world audit framework provides a rigorous methodology to separate these overlapping signals, mapping the precise boundaries of exact-record survival and erasure.
Ultimately, these findings shift the conceptual baseline of machine learning audits from treating neural networks as discrete databases toward understanding them as continuous statistical representations of broad data neighborhoods.
In doing so, this work establishes a new, defensible standard for data-provenance auditing, ensuring that future privacy and legal evaluations remain grounded in what models actually reveal.

\bibliographystyle{iclr2027_conference}
\bibliography{normal_generated_py3}

\appendix

\section{The evidence model}
\label{app:model}
\label{sec:model}

\paragraph{Terminology.} Throughout the appendix, a \emph{shield} is a record of the controlled family added around a
target (\S\ref{sec:vision}), and a \emph{shielded} target is one whose family is present in training.

The trichotomy, its fidelity ordering, and the saturation are not independent findings; one scalar model
\emph{organizes} them (the protective and null arms follow from a single concavity assumption; the backfire
arm needs one added assumption, below). Fix target $x$ of class $y$, and assign each training example $z$ an \emph{evidence weight}
toward $(x,y)$, $s(x){=}1$; a same-label example carries $s(z){=}\rho(z)\in[0,1]$, where $\rho$ is its content
fidelity to $x$ (copies $\rho{=}1$, near-duplicates slightly less, content-disjoint $\rho{\approx}0$); a
conflicting-label example near $x$ carries $s(z){=}-\beta\rho(z)$ with $\beta{>}0$. In text there is no label
channel, so a shared-prefix/divergent-suffix shield has small $s\ge 0$ and $\beta{=}0$. With total evidence
$t(D){=}\sum_{z\in D}s(z)$, model the score $\phi(x){=}g(t(D))+\varepsilon$, $\varepsilon\sim\mathcal
N(0,\sigma^2)$ over training randomness, $g$ increasing.

\begin{assumption}[saturating fit]\label{as:a1}
$g$ is strictly concave and bounded on $[0,\infty)$ with $g'\!\to\!0$.
\end{assumption}
This is cheap to defend, cross-entropy gradients on an example scale with $1{-}p_y$ and vanish as it is fit,
so each added faithful copy moves the score less; it holds exactly for faithful (positive-evidence) shields in
the kernel instantiation (Appendix~\ref{app:krr}). The backfire arm alone needs one more
ingredient.
\begin{assumption}[boundary steepness]\label{as:a2}
Extended to conflicting evidence $t<0$, $g$ is S-shaped; $g'$ is unimodal with mode $t^\star\le0$, so $g$ is
convex below $t^\star$ and concave above (A\ref{as:a1} is its restriction to $t\ge0$).
\end{assumption}
The protective and null arms use A\ref{as:a1} alone; A\ref{as:a2} is invoked only for conflict.

\begin{proposition}[gap identity, mean level]\label{prop:gap}
With shield evidence $t_0$ present in both worlds, $\mu_{\mathrm{in}}-\mu_{\mathrm{out}}=g(t_0{+}1)-g(t_0)$.
Every shield effect on the exact-record mean gap is the discrete derivative of $g$ at $t_0$.
\end{proposition}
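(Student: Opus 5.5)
The argument is a direct computation inside the scalar evidence model just introduced, so the plan is to write down the total evidence in each of the two worlds and take expectations. Fix the target $x$ and a background set $D$ together with a shield set $S$ whose total evidence toward $(x,y)$ is $t_0=\sum_{z\in D\cup S}s(z)$. The OUT world is trained on $D\cup S$ and the IN world on $D\cup S\cup\{x\}$. The shields, and hence $t_0$, are held fixed across the two worlds; this is the four-world design with $F$ fixed and only $E$ toggled. By additivity of $t(\cdot)$ and the normalization $s(x)=1$, we get $t_{\mathrm{out}}=t_0$ and $t_{\mathrm{in}}=t_0+1$.

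Next, use the score model $\phi(x)=g(t(D))+\varepsilon$ with $\varepsilon\sim\mathcal N(0,\sigma^2)$ mean zero over training randomness. This gives $\mu_{\mathrm{out}}=\mathbb E[\phi(x)\mid t_0]=g(t_0)$ and $\mu_{\mathrm{in}}=g(t_0+1)$. Subtracting yields $\mu_{\mathrm{in}}-\mu_{\mathrm{out}}=g(t_0+1)-g(t_0)$. This is exactly the forward difference of $g$ with unit step at $t_0$, and it coincides with $\delta_E^F$ in \cref{eq:factorial} and with $\Delta_x(S)$ in \cref{eq:marginal}. For the second sentence, note that the only way a shield configuration enters the mean gap is through the scalar $t_0$. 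Two configurations with equal $t_0$ therefore give equal gaps. Comparing against the no-shield baseline, where $t_0$ reduces to the background evidence, any shield effect is a change in the argument at which this discrete derivative is evaluated.

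There is no real obstacle, but one point needs care: the claim is only at the mean level. The noise must be additive with a mean that does not depend on $E$. The statement asserts nothing about variances or the standardized gap $d_E^F$, so no assumption beyond the model definition is needed; in particular, A\ref{as:a1} and A\ref{as:a2} are not used here. I would close with a remark that A\ref{as:a1} then makes this discrete derivative decrease in $t_0$ (substitution), which sets up the subsequent results.
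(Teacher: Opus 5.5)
Your proof is correct and is essentially the paper's own argument: substitute $t_{\mathrm{in}}=t_0+1$ and $t_{\mathrm{out}}=t_0$ (from $s(x)=1$ and additivity of $t$) into $\phi(x)=g(t(D))+\varepsilon$ with mean-zero noise, as the paper does in its kernel instantiation ($n=k+1$ versus $n=k$), and you are right that no concavity assumption is used. The only cosmetic difference is that you fold background evidence into $t_0$, whereas the paper treats content-disjoint background as carrying $s\approx 0$, so its $t_0$ is the shield evidence alone and the no-shield baseline gap is $g(1)-g(0)$.
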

\begin{corollary}[advantage readout]\label{cor:adv}
For Gaussian scores the per-record advantage is a function of $\Delta\mu/\sigma_p$, where
$\sigma_p{=}\sqrt{\sigma_{\mathrm{in}}^2+\sigma_{\mathrm{out}}^2}$ is the pooled dispersion (AUC$=\Phi(\Delta\mu/\sigma_p)$);
so \emph{at pooled dispersion held fixed across the budget}, it is strictly increasing in the mean gap $\Delta\mu$.
\end{corollary}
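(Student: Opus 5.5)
The corollary reduces to two standard facts: the Gaussian formula for AUC, and the monotonicity of any per-record advantage functional that depends on $\Delta\mu/\sigma_p$.

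\textbf{Step 1: Gaussian AUC.} Write the IN and OUT scores of the target as independent draws $\phi_{\mathrm{in}}\sim\mathcal N(\mu_{\mathrm{in}},\sigma_{\mathrm{in}}^2)$ and $\phi_{\mathrm{out}}\sim\mathcal N(\mu_{\mathrm{out}},\sigma_{\mathrm{out}}^2)$. These are the score distributions induced by the evidence model over training randomness; the model is stated with a common $\sigma$, but we keep the two variances distinct for generality. The AUC of the score rule is $P(\phi_{\mathrm{in}}>\phi_{\mathrm{out}})$. The difference $\phi_{\mathrm{in}}-\phi_{\mathrm{out}}$ is Gaussian with mean $\Delta\mu=\mu_{\mathrm{in}}-\mu_{\mathrm{out}}$ and variance $\sigma_p^2$. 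Standardizing gives $\mathrm{AUC}=\Phi(\Delta\mu/\sigma_p)$.

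\textbf{Step 2: advantage as a function of $\Delta\mu/\sigma_p$.} For the paper's advantage $\adv=2|\mathrm{AUC}-0.5|$ this is immediate, since $\adv=|2\Phi(\Delta\mu/\sigma_p)-1|$. For a threshold-based advantage $\max_\tau(\mathrm{TPR}-\mathrm{FPR})$ I would use invariance under affine reparametrization. Shifting the score by $\mu_{\mathrm{out}}$ and scaling by $\sigma_p$ leaves the ROC curve unchanged. The standardized pair is then determined by $\Delta\mu/\sigma_p$ together with the variance split $\sigma_{\mathrm{in}}/\sigma_p$. Taking the dispersions as held fixed, as in the corollary's premise, the advantage depends on $\Delta\mu$ only through $\Delta\mu/\sigma_p$.

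\textbf{Step 3: monotonicity.} Under the orientation in which IN scores are higher, so $\Delta\mu\ge 0$ (which holds here), $\adv=2\Phi(\Delta\mu/\sigma_p)-1$. This is strictly increasing in $\Delta\mu$ because $\Phi$ is strictly increasing and $\sigma_p>0$ is fixed. For the threshold version, increasing $\Delta\mu$ with both variances fixed shifts the IN law rightward. At every threshold this strictly raises TPR and leaves FPR unchanged, so the ROC curve moves up pointwise and the optimum strictly increases.

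\textbf{Step 4: connect to Proposition~\ref{prop:gap}.} Substituting $\Delta\mu=g(t_0+1)-g(t_0)$ shows that, at fixed $\sigma_p$, any shield effect on advantage is read off the discrete derivative of $g$. That completes the readout claim.

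The only real obstacle is the scope of ``advantage.'' If it is defined via $|\cdot|$, strict monotonicity requires restricting to $\Delta\mu\ge0$. This sign condition follows from $g$ being increasing, since $g(t_0+1)>g(t_0)$. I would state that restriction explicitly and otherwise treat the result as routine.
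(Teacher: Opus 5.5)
Your argument is correct and follows the paper's route. The paper simply invokes the binormal identity $\mathrm{AUC}=\Phi(\Delta\mu/\sigma_p)$, noting that unequal IN/OUT variances are allowed, and reads monotonicity off $\Phi$; that is exactly your Steps 1 and 3 for $\adv=2|\mathrm{AUC}-0.5|$, and your explicit restriction to $\Delta\mu\ge 0$ (supplied by $g$ increasing) is a useful clarification the paper leaves implicit. The $\max_\tau(\mathrm{TPR}-\mathrm{FPR})$ detour is unnecessary, and for that variant holding only $\sigma_p$ fixed (the corollary's premise) would not suffice, since, as you yourself observe, the variance split $\sigma_{\mathrm{in}}/\sigma_p$ also enters.
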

Concavity forces $\mu_{\mathrm{out}}$ to rise more than $\mu_{\mathrm{in}}$ ($g(t_0){-}g(0)>g(t_0{+}1){-}g(1)$),
the asymmetry of Figure~\ref{fig:f11}a ($+5.45$ vs $+3.10$).

\begin{proposition}[the four arms]\label{prop:tri}
Let $h(t){=}g(t{+}1)-g(t)$; under A\ref{as:a1} it is positive, strictly decreasing, and $h(t)\!\to\!0$.
Then
(i) in the \emph{protect} arm, faithful shields give $t_0{=}k\rho$, so the gap $h(k\rho)$ strictly decreases in $k$ and
$\rho$ (Fig.~\ref{fig:f10}); (ii) in the \emph{null} arm, content-disjoint additions have
$\rho{\approx}0$, so $t_0{=}0$ and the gap equals baseline regardless of count (density, random); (iii) in the
\emph{backfire} arm, under A\ref{as:a2}, a conflict count $m$ giving $t_0{=}{-}\beta m$ with $t^\star{\le}t_0{<}0$ yields
$h(t_0){>}h(0)$, widening the gap (non-monotone in $m$); (iv) in \emph{text}, $\beta{=}0$ and
first-token-only conflict give $t_0{\approx}0^+$, hence null-to-slightly-protective.
\end{proposition}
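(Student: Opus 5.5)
The plan is to reduce every arm to a statement about the discrete increment $h(t)=g(t{+}1)-g(t)$, using Proposition~\ref{prop:gap}: the exact-record mean gap with shield evidence $t_0$ in both worlds is exactly $h(t_0)$. First I establish the preliminary claims about $h$ under A\ref{as:a1}. Positivity follows because $g$ is increasing. Strict decrease follows from strict concavity: for $t<t'$, the chord slopes of a strictly concave function satisfy $g(t{+}1)-g(t)>g(t'{+}1)-g(t')$. One way to see this is to write $h(t)=\int_0^1 g'(t{+}u)\,du$ with $g'$ strictly decreasing. If $g$ is not assumed differentiable, I would instead apply the three-chord lemma to the points $t<t'$, $t{+}1<t'{+}1$. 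The limit $h(t)\to0$ comes from boundedness: $g$ is increasing and bounded, so $g(t)\to L$ and $h(t)\le L-g(t)\to0$. The condition $g'\to0$ also gives it directly through the mean value theorem.

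Arms (i), (ii) and (iv) are then substitutions. (i) Faithful shields each contribute $\rho$, so $t_0=k\rho$. The map $(k,\rho)\mapsto k\rho$ is strictly increasing in each argument for $k\ge1$ and $\rho>0$, and composing with the strictly decreasing $h$ gives the monotonicity. (ii) With $\rho\approx0$ we get $t_0=0$ for every count, so the gap equals the baseline $h(0)$. (iv) In text we have $\beta=0$, and the shield carries a small $s\ge0$, so $t_0$ is a small nonnegative number. Then $h(t_0)\le h(0)$, with equality only at $t_0=0$, and by continuity $h(t_0)$ is close to $h(0)$. That is the "null-to-slightly-protective" conclusion. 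Continuity of $h$ holds because a concave function is continuous on the interior of its domain, and I extend this to the endpoint using the S-shaped extension below $0$, or simply read (iv) as a one-sided statement.

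Arm (iii) is the main obstacle, because $h(t_0)$ now straddles the inflection point. Under A\ref{as:a2}, write
\[
h(t_0)-h(0)=\int_0^1\bigl(g'(t_0{+}u)-g'(u)\bigr)\,du .
\]
I want the integrand to be positive. For $u\in[0,1]$ we have $u\ge0\ge t^\star$, and $t_0+u<u$. If also $t_0+u\ge t^\star$, both points lie on the decreasing branch of the unimodal $g'$, so $g'(t_0{+}u)>g'(u)$. The hypothesis $t^\star\le t_0$ guarantees $t_0+u\ge t^\star$ for every $u\ge0$, so the integrand is strictly positive throughout and $h(t_0)>h(0)$.

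I would also remark on non-monotonicity in $m$. Once $-\beta m$ falls well below $t^\star$, $g'$ becomes small on the convex tail and the gap shrinks again. The claim only needs the strict inequality inside the window $t^\star\le t_0<0$ together with this eventual decay. If $g$ is not differentiable, I would replace the integral with the chord-slope comparison on the concave region $[t^\star,\infty)$, which contains all four points $t_0, t_0{+}1, 0, 1$.
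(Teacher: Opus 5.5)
Your proposal is correct and follows the paper's argument. Your representation $h(t)=\int_0^1 g'(t+u)\,du$ is the integrated form of the paper's one-line $h'(t)=g'(t{+}1)-g'(t)<0$ from strict concavity, and your proof of (iii) compares $g'$ on the decreasing branch $[t^\star,\infty)$, which contains both $t_0+u$ and $u$; this is exactly the paper's appeal to $g'$ peaking at $t^\star\le 0$, with the three-chord fallback and the explicit check $t_0\ge t^\star$ adding care rather than changing the route.
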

The proofs are one line ($h'{=}g'(t{+}1)-g'(t){<}0$ by strict concavity; (iii) uses A\ref{as:a2}, i.e.\ $g'$
peaking at $t^\star{\le}0$; Appendix~\ref{app:krr}). Thus A\ref{as:a1} \emph{alone} forces the protect and null
arms and the fidelity ordering, the backfire arm additionally requires A\ref{as:a2}, and the text null
($-0.09,-0.00$) follows from $\beta{=}0$; the model \emph{organizes} the trichotomy rather than deriving all
four arms from one assumption.

\paragraph{Tested prediction.} A\ref{as:a2} predicts the backfire is non-monotone in the conflict budget, the
\emph{advantage} should eventually turn over. It does not; sweeping $k_{\mathrm{conf}}$ to $256$
(Figure~\ref{fig:f13}), $\Delta\adv$ plateaus at ${\sim}0.30$ even at $8\times$ the budget. This is a failure of
\emph{Corollary~\ref{cor:adv}'s budget-invariant-dispersion assumption}, not of the mean-gap model; the moment
decomposition and its caveats are given once, in Appendix~\ref{app:krr} (Limits). We report this plainly rather
than rescue it, and the backfire \emph{sign} needs no A\ref{as:a2} in any case; the $\mu$-asymmetry
($\mu_{\text{out}}$ crashes more than $\mu_{\text{in}}$, \S\ref{sec:mechanism}) already delivers it. The model's
\emph{positive} record stands. It unifies the four arms under one decreasing $h$, quantitatively fits the budget
curve ($\lambda'{\approx}3.5$, out-of-sample to $k{=}16$; \S\ref{sec:budget}), reproduces the Truth-Serum
amplification at $k{=}8$ ($+0.24^*$, \S\ref{sec:truthserum}), and its basin analysis (Prop.~\ref{prop:basin},
\S\ref{sec:adaptive}) located the residual score-channel surface that the full-covariance and differencing
attacks then probed and found empty.

\begin{figure}[t]
\centering
\begin{subfigure}[t]{0.48\columnwidth}
  \centering
  \includegraphics[width=\linewidth]{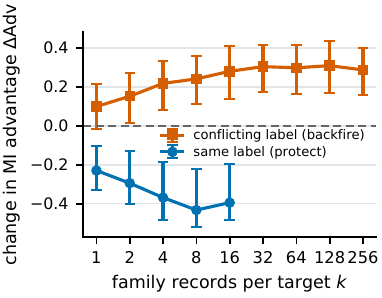}
  \caption{Advantage change vs. budget}
\end{subfigure}\hfill
\begin{subfigure}[t]{0.48\columnwidth}
  \centering
  \includegraphics[width=\linewidth]{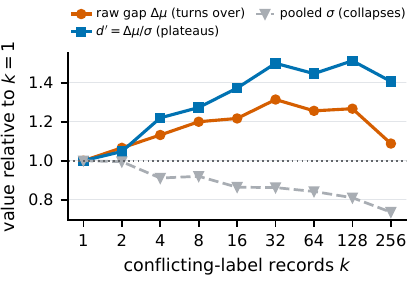}
  \caption{Moment decomposition}
\end{subfigure}
\caption{Pushing the conflicting budget to $k{=}256$. \textbf{(a)} Two arms of one decreasing-$h$ mechanism.
Same-label evidence saturates \emph{protection}, conflicting-label evidence saturates \emph{backfire}
(Prop.~\ref{prop:tri}); the backfire $\Delta\adv$ plateaus at ${\sim}0.30$ with no turnover, and its
$k{=}8$ point matches the Truth-Serum amplification (\S\ref{sec:truthserum}). \textbf{(b)} The
A\ref{as:a2}-predicted turnover does not appear in the advantage; the moment decomposition and its caveats are
in Appendix~\ref{app:krr}. Panel (b) series are normalized to $k{=}1$. Note $k{=}256$ is a
\emph{mechanism probe} (a ${\sim}25\%$ dataset increase), not a deployment budget; the protective sweep stays
at $k{\le}16$ (sub-percent). Vulnerable targets, crossed-bootstrap 95\% CIs.}
\label{fig:f13}
\end{figure}

\section{Kernel instantiation and proofs}
\label{app:krr}

We instantiate the evidence model of \S\ref{sec:model} with kernel ridge regression (equivalently a Gaussian
process posterior mean), where Assumption~\ref{as:a1} and the \emph{protective} statements
(Prop.~\ref{prop:gap}, Prop.~\ref{prop:tri}(i,ii)) hold \emph{exactly}; the backfire arm (A\ref{as:a2}) and the
basin bound (Prop.~\ref{prop:basin}) need the added assumptions noted below and are \emph{not} instantiated
here.

\paragraph{Setup.} Predict the score for $x$ by KRR against a background value $b$ with ridge $\lambda$, and
suppose $x$'s evidence is carried by a cluster of $n$ points each at kernel similarity $\kappa$ to $x$,
against an uninformative background. The posterior mean at $x$ is
\begin{equation}
g(n)=b+\Big(1-\tfrac{\lambda}{n\kappa+\lambda}\Big)(y-b), \qquad g(0)=b,
\label{eq:gn}
\end{equation}
which is strictly increasing in $n$ toward the label $y$. This is exact for $n$ \emph{identical} copies of $x$ (rank-one Gram
$K{=}\kappa\mathbf 1\mathbf 1^\top$, each at cross-similarity $\kappa$ to $x$); for distinct near-duplicates the
Gram is only approximately rank-one and \eqref{eq:gn} is a scalar-evidence heuristic that treats the cluster's
evidence as additive ($n\kappa$). We use it only for the protective (positive-evidence) arms.

\paragraph{Concavity (A\ref{as:a1}).} With $n$ continuous, $g'(n)=(y{-}b)\lambda\kappa/(n\kappa+\lambda)^2>0$
and $g''(n)=-2(y{-}b)\lambda\kappa^2/(n\kappa+\lambda)^3<0$ for $y>b$, and $g'(n)\to0$. Hence $g$ is strictly
concave and bounded, so A\ref{as:a1} holds exactly. Note $g'$ is \emph{monotone} on the valid domain
$n\ge0$. This $g$ realizes the protective and null arms exactly but does \emph{not} instantiate the backfire
arm, which requires the S-shaped $g$ of A\ref{as:a2} (an interior $g'$ peak at $t^\star\le0$), a separate
modeling assumption the rational KRR $g$ does not satisfy.

\paragraph{Gap and budget curve (Prop.~\ref{prop:gap}, \ref{prop:tri}).} The IN world has the original plus
$k$ copies ($n{=}k{+}1$), the OUT world only the $k$ copies ($n{=}k$), so
\begin{align}
\Delta(k)&=g(k{+}1)-g(k)=\frac{(y{-}b)\,\kappa\lambda}{(k\kappa+\lambda)\big((k{+}1)\kappa+\lambda\big)},\nonumber\\
\frac{\Delta(k)}{\Delta(0)}&=\frac{\lambda'(1{+}\lambda')}{(k{+}\lambda')(k{+}1{+}\lambda')},\qquad \lambda'{=}\lambda/\kappa,
\label{eq:dk}
\end{align}
a one-parameter curve, strictly decreasing in $k$, $\to0$
(Prop.~\ref{prop:tri}(i); the fit in Fig.~\ref{fig:f7}). Fidelity enters as $\kappa$. A shield of similarity
$\kappa_s<\kappa$ contributes $k\kappa_s/\kappa$ effective copies, so protection is monotone in fidelity;
content-disjoint shields have $\kappa_s\approx0$ and leave $\Delta$ at baseline (the null). With
$\phi(x)=g(\cdot)+\varepsilon$, for Gaussian IN/OUT scores per-record
$\mathrm{AUC}=\Phi(\Delta\mu/\sigma_p)$ \emph{exactly} (pooled $\sigma_p{=}\sqrt{\sigma_{\text{in}}^2+\sigma_{\text{out}}^2}$;
no equal-IN/OUT-variance assumption is needed). This map is only \emph{qualitative} in practice. Our measured
pairs depart from a single-$\sigma$ reading by up to ${\sim}2\times$ (mild non-Gaussianity and budget-varying
$\sigma_p$; unequal $\sigma_{\text{in}},\sigma_{\text{out}}$ does \emph{not} affect it), so we treat
Cor.~\ref{cor:adv}'s advantage claim as monotone in $\Delta\mu/\sigma_p$, not as a specific functional form.

\paragraph{Basin invariance (Prop.~\ref{prop:basin}).} For a query $q$ at similarity $\kappa_q$ to $x$, the
IN$-$OUT shift is $\delta(q)\propto(y{-}b)\kappa_q\lambda/(n_q\kappa+\lambda)^2$, where $n_q$ counts nearby
evidence. Where $\kappa_q$ is large (near $x$), faithful shields make $n_q$ large, so $\delta(q)\to0$ by the
same saturation; where $\kappa_q\approx0$ (far), $\delta(q)\approx0$ directly. The Neyman--Pearson bound for
a Gaussian location family with \emph{isotropic}, independent-per-query noise ($\Sigma{=}\sigma^2 I$) and mean
gap $\lVert\boldsymbol\delta\rVert_2$ gives Prop.~\ref{prop:basin}; correlated queries instead require the
Mahalanobis $\sqrt{\boldsymbol\delta^\top\Sigma^{-1}\boldsymbol\delta}$ and fall outside it (handled
empirically, \S\ref{sec:adaptive}). For exact copies ($\kappa_s{=}\kappa$), $\delta(q)\to0$ for all $q$ as
$k\to\infty$.

\begin{remark}[mean-shift decomposition]\label{cor:cons}
\begin{align*}
& \underbrace{g(t_0{+}1)-g(t_0)}_{\text{membership}\,\mid\,\text{cluster}}
 \;+\; \underbrace{g(t_0)-g(0)}_{\text{cluster presence}} \\[-1pt]
&\qquad =\; g(t_0{+}1)-g(0)\ \ge\ g(1)-g(0)\ \ (\text{undefended}).
\end{align*}
\end{remark}
\paragraph{Mean-shift decomposition (Rem.~\ref{cor:cons}).} Equation~\eqref{eq:gn} telescopes directly; the
total is a lower bound, \emph{not} a conserved quantity (it grows with $k$), and as $k$ grows the
point-membership term $\to0$ while the cluster-presence term grows.

\paragraph{Limits.} The model assumes (a) a scalar sufficient statistic; (b) a pooled dispersion $\sigma_p$
\emph{constant across the budget}, violated in practice ($\sigma_p$ shrinks as $k$ grows, i.e.\
heteroscedasticity across budgets, which is what erases the predicted advantage turnover in the confusion
sweep); (c) an identical-copy rank-one Gram matrix for the exact
derivation, whose extension to distinct near-duplicates is a scalar-evidence heuristic, not an established bound; (d) A\ref{as:a2}'s S-shape for the backfire arm, which the rational KRR $g$ does not satisfy;
and (e) isotropic query noise for Prop.~\ref{prop:basin}, so correlated-query attacks are covered only
empirically (\S\ref{sec:adaptive}). A\ref{as:a1} is proved here for the kernel case and assumed for deep
networks (cross-entropy gradient decay). It is a \emph{mechanism} model consistent with the phenomenology, not
a security proof, and speaks only to the score channel; non-score attacks (training-set access,
near-duplicate counting) fall outside the query-only access considered here. Its one prediction beyond the
fitted phenomenology, a conflicting-budget turnover in the \emph{advantage} (Prop.~\ref{prop:tri}(iii)), is
\emph{not} borne out; the advantage plateaus out to $k{=}256$ (Fig.~\ref{fig:f13}). Only a post-hoc
decomposition suggests a turnover in the unobservable mean gap ($\Delta\mu$ peaks near $k{\approx}32$ at $3.64$
and declines to $3.01$ by $k{=}256$; paired $+0.63$, but a single post-selected contrast on a target-only
bootstrap, so suggestive not confirmatory), while the pooled $\sigma$ falls ${\approx}15\%$ over $k{=}32{\to}256$
(${-}26\%$ over $k{=}1{\to}256$), leaving $d'{=}\Delta\mu/\sigma$, and the advantage, flat. Separately, per-instance DP already shows that records with similar
neighbors enjoy tighter per-instance bounds than isolated outliers, which is precisely what a shield
manufactures, so the connection to a formal per-record guarantee has analytic support, not merely conjecture;
making it quantitative for shielded records is the closest route from this heuristic toward a certificate. The
$\sigma$ collapse itself has a mechanistic reading. Heavy conflicting injection makes the local fit
near-deterministic (the model confidently predicts the wrong label regardless of $x$'s membership), so
$\phi(x)$ becomes less sensitive to the membership mask.

\paragraph{Reproducibility details.} We fit $\lambda'$ in $d'$-space over $k\!\in\!\{1,2,4,8\}$ through the
Gaussian AUC relation ($k{=}16$ treated as saturation noise), yielding $\lambda'{\approx}3.5$, $R^2{=}0.74$; the
curve's apparent floor at large $k$ is the AUC map saturating, so the fit does not pass through $k{=}1$ exactly
($-0.18$ predicted vs the measured $-0.23$). Shadow-model counts per experiment follow.
\begin{table}[tbp]
\centering\small
\caption{Number of shadow models used in each experiment. Counts separated by a slash report the baseline and intervention arms, respectively.}
\label{tab:shadow-counts}
\begin{tabular}{lc}
\toprule
experiment & shadows \\
\midrule
baseline & $96$ \\
main trichotomy arms & $48$ \\
generalization (CIFAR-100, VGG; baseline/arms) & $64/32$ \\
text (scratch, finetune) & $32$ \\
adaptive LiRA, RMIA, mechanism & $96$ \\
DP-SGD $\epsilon$-sweep, GN control & $24$ each \\
De-style DP & $16$ \\
$k$-sweep, interp, confusion budget & $48$ \\
why-keep-$x$ (utility; trade-off probe) & $64$ ($32$) \\
CC-News scale-up (\emph{none}/\emph{natdup}) & $64/64$ \\
WikiText natural siblings & $96$ \\
WikiText erasure frontier (per arm) & $96$ \\
AG-News objective control (cls./AR) & $64/64$ \\
CC-News semantic \emph{natdup} & $28$ \\
dedup-and-retrain (\emph{natdup}) & $64$ \\
Qwen3.5 LoRA ablation & $48$ \\
\bottomrule
\end{tabular}
\end{table}

\section{Adaptive-attack details}
\label{app:adaptive}

This section gives the full adaptive-attack analysis summarized in \S\ref{sec:adaptive}. We build a
shield-aware LiRA, a per-target diagonal-Gaussian likelihood ratio over the joint vector
$z=[\phi(x), \phi(\text{$16$ augmentations of }x)]$, calibrated on \emph{discovery} shadows that themselves
contain the shields, frozen, and evaluated on disjoint shadows. This attacker is stronger than naive LiRA on
undefended records. On the vulnerable targets its TPR@$1\%$FPR is $0.24$ versus naive $0.17$
(Figure~\ref{fig:f4}), so the comparison is not against a weakened adversary. On shielded records its
TPR@$1\%$FPR falls to $0.01$ (neardup) and $0.00$ (copies), with AUC $\approx 0.50$--$0.53$, at the $1\%$-FPR
floor and a ${>}20\times$ reduction from the adaptive baseline. (All adaptive-attack numbers are on the same
vulnerable-$24$ targets as RMIA below, for a like-for-like comparison.) We do not claim immunity to all future
attacks; we report that this stronger, shield-aware attacker gains essentially no excess true-positive rate
over the $1\%$-FPR floor that matters for privacy.

\paragraph{Attackers the diagonal LR cannot represent.} A diagonal likelihood ratio cannot capture
cross-augmentation correlation or the natural shield-aware statistic $\phi(x)$ minus the local basin level.
We therefore add two stronger attacks on the same features, namely a \emph{full-covariance} joint LiRA over $z$
(isotropic shrinkage, intensity $0.3$, for the $24$-sample/$17$-dim estimator), and the explicit difference
$\phi(x)-\overline{\phi(\text{aug})}$ (the text-domain analogue is the neighbourhood-comparison attack of
\citet{MMJSSB23}). On undefended
records the full-covariance attack matches the diagonal one (AUC $0.84$ vs $0.86$), confirming it is a real
attacker; the differencing statistic is weak \emph{even there} (AUC $0.53$), because subtracting the shared basin
removes the membership signal along with the basin. On shielded records both collapse to chance
(full-covariance AUC $0.49$--$0.51$, difference AUC $0.47$--$0.50$). To rule out that the $17$-dimensional
covariance is merely under-determined (${\sim}24$ samples), we also run the optimal contrast as a
\emph{well-conditioned $2$-dimensional} full-covariance LR on $[\phi(x),\overline{\phi(\text{aug})}]$. It is
strong undefended (AUC $0.84$, TPR@$1\%$ $0.22$) yet also floors on shielded records ($0.53/0.01$ neardup,
$0.51/0.00$ copies), and sweeping the $17$-d shrinkage over $\{0.1,0.3,0.5\}$ leaves the shielded AUC in
$[0.49,0.52]$ throughout. Even at $x$ itself, where residual curvature might let signal
survive, none does. The exact-record signal remains at chance against every LiRA variant we construct,
diagonal or full-covariance, aggregating or differencing.

\begin{figure}[tbp]
\centering
\includegraphics[width=0.5\columnwidth]{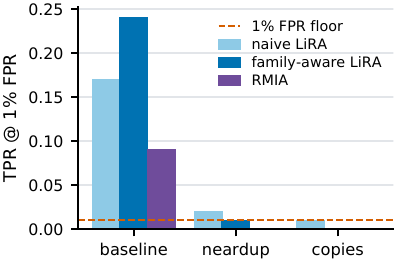}
\caption{A shield-aware adaptive attacker is stronger than naive LiRA on undefended records (baseline) yet
is driven to the $1\%$-FPR floor on shielded records; RMIA, a structurally different attack family, behaves
likewise.}
\label{fig:f4}
\end{figure}

\paragraph{An attacker that reads the shields.} The shields are treated as \emph{non-secret}, so the
strongest adaptive attacker should query the \emph{actual} $k{=}8$ shield images, not generic augmentations of
$x$. We record $\phi$ on the real shields per shadow and fit a per-target full-covariance LR over
$[\phi(x),\phi(\text{shield}_1),\dots,\phi(\text{shield}_8)]$. It gains \emph{nothing}. On the shielded world
its AUC is $0.52$ (TPR@$1\%$ $0.01$), indistinguishable from naive LiRA's $0.52$, and combining the shields
with the $16$ augmentations ($25$-dim) also floors ($0.51$). Reading the known shields does not help because
they are present in \emph{both} the member and non-member worlds and near-saturated, which makes them membership-invariant,
exactly the basin of Prop.~\ref{prop:basin}. This closes the most natural adaptive gap. Even an attacker who
knows and queries every shield floors.

\paragraph{Why score-channel attacks floor.} The model of \S\ref{sec:model} explains this under one stated
assumption. Model any black-box attacker as choosing queries $q_1,\dots,q_m$ and observing
$\phi(q_j){=}g(t_j+\rho_j\,\mathbf 1[x\in D])+\varepsilon_j$, where $t_j$ is shield evidence at $q_j$ and
$\rho_j$ is $x$'s own influence there. IN and OUT induce Gaussians differing only in mean, with per-query
shift $\delta_j{=}g(t_j{+}\rho_j)-g(t_j)$.
\begin{proposition}[basin invariance, independent-query case]\label{prop:basin}
If the per-query noise is independent ($\varepsilon_j$ i.i.d., $\Sigma{=}\sigma^2 I$), then for any test
$\mathrm{TPR}\le\Phi\!\big(\Phi^{-1}(\mathrm{FPR})+\lVert\boldsymbol\delta\rVert_2/\sigma\big)$. Near $x$,
$\rho_j$ is large but so is $t_j$ (faithful shields shadow $x$'s influence profile), so $\delta_j$ is killed by
saturation (A\ref{as:a1}); far from $x$, $\delta_j$ is killed because $\rho_j{\approx}0$. For exact copies the
shadowing is perfect and $\sup_j\delta_j\!\to\!0$ as $k\!\to\!\infty$.
\end{proposition}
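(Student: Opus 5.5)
The plan is to reduce the problem to a one-dimensional Gaussian test via Neyman--Pearson, and then read the qualitative claims off the monotonicity of $h$ from Proposition~\ref{prop:tri}. Under the stated model, the observation vector is $\boldsymbol\phi=(\phi(q_1),\dots,\phi(q_m))$. Under OUT it is $\mathcal N(\boldsymbol\mu_0,\sigma^2 I)$ with $\mu_{0,j}=g(t_j)$, and under IN it is $\mathcal N(\boldsymbol\mu_0+\boldsymbol\delta,\sigma^2 I)$ with $\delta_j=g(t_j+\rho_j)-g(t_j)$. This holds because the shield evidence $t_j$ is present in both worlds and only $x$'s contribution $\rho_j\mathbf 1[x\in D]$ toggles, exactly as in Proposition~\ref{prop:gap}. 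Since the attacker's query choice is fixed in advance (non-adaptive black-box queries), the two hypotheses are simple, and Neyman--Pearson says the likelihood-ratio test is most powerful at every level. Any test, randomized or not, therefore has power at most that of the LR test at the same FPR.

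Next I compute the LR test explicitly. The log-likelihood ratio is affine in the sufficient statistic $T=\boldsymbol\delta^\top(\boldsymbol\phi-\boldsymbol\mu_0)/(\sigma\lVert\boldsymbol\delta\rVert_2)$. Under OUT, $T\sim\mathcal N(0,1)$. Under IN, $T\sim\mathcal N(\lVert\boldsymbol\delta\rVert_2/\sigma,1)$, because $\boldsymbol\delta^\top\boldsymbol\delta/(\sigma\lVert\boldsymbol\delta\rVert_2)=\lVert\boldsymbol\delta\rVert_2/\sigma$ and the variance is $\boldsymbol\delta^\top\sigma^2I\boldsymbol\delta/(\sigma^2\lVert\boldsymbol\delta\rVert_2^2)=1$. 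Rejecting when $T>\Phi^{-1}(1-\mathrm{FPR})$ gives $\mathrm{TPR}=1-\Phi(\Phi^{-1}(1-\mathrm{FPR})-\lVert\boldsymbol\delta\rVert_2/\sigma)=\Phi(\Phi^{-1}(\mathrm{FPR})+\lVert\boldsymbol\delta\rVert_2/\sigma)$ by symmetry of $\Phi$. This yields the bound. The case $\boldsymbol\delta=0$ is trivial (TPR $=$ FPR). Isotropy is used only here; a general $\Sigma$ would replace $\lVert\boldsymbol\delta\rVert_2$ by the Mahalanobis norm, which is why the statement is restricted to this case.

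For the qualitative clauses, I bound each $\delta_j$. Under A\ref{as:a1}, $g$ is concave, so $\delta_j\le\rho_j g'(t_j)$. Near $x$, faithful shields give $t_j\gtrsim k\rho_j$, so by Proposition~\ref{prop:tri}(i) and $g'\to0$, $\delta_j\le\rho_j g'(k\rho_j)$ becomes small as $k$ grows. Far from $x$, $\rho_j\approx0$, and $\delta_j\le\rho_j g'(0)\approx0$ because $g'(0)$ is finite (boundedness and concavity make $g'$ decreasing; I would assume $g'(0)<\infty$). For exact copies, every shield carries exactly $x$'s influence profile, so $t_j=k\rho_j$ for every $j$. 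Then $\delta_j=g((k+1)\rho_j)-g(k\rho_j)$, and I need this to vanish uniformly in $j$. Uniformity is the main obstacle: for a fixed $\rho$, $h$ decreases to $0$, but the supremum is over $\rho_j\in[0,1]$. I would handle it by splitting. If $\rho_j\le\eta$, then $\delta_j\le\eta g'(0)$. If $\rho_j>\eta$, then $\delta_j\le g'(k\eta)$. Choosing $\eta\to0$ slowly, e.g.\ $\eta=k^{-1/2}$, gives $\sup_j\delta_j\le k^{-1/2}g'(0)+g'(\sqrt k)\to0$.

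Finally, since the qualitative clauses are heuristic, I would state them as consequences of the displayed bound: $\lVert\boldsymbol\delta\rVert_2\le\sqrt m\,\sup_j\delta_j\to0$, so TPR collapses to FPR for any fixed query budget $m$.
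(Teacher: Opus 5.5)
Your Neyman--Pearson step matches the paper's. Both reduce to a Gaussian location family with isotropic noise, where the likelihood-ratio test thresholds the projection onto $\boldsymbol\delta$ and attains exactly $\Phi(\Phi^{-1}(\mathrm{FPR})+\lVert\boldsymbol\delta\rVert_2/\sigma)$. Both also note that correlated noise would replace $\lVert\boldsymbol\delta\rVert_2/\sigma$ by the Mahalanobis norm.

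The routes differ on the qualitative clauses. The paper (Appendix~\ref{app:krr}) argues them in the kernel-ridge instantiation, where the shift has the closed form $\delta(q)\propto(y-b)\kappa_q\lambda/(n_q\kappa+\lambda)^2$. The saturating denominator kills it near $x$, and the factor $\kappa_q$ kills it far away. For exact copies the denominator comes from the training Gram of $k$ or $k+1$ identical points, so it is the same for every query. Hence $\sup_q\delta(q)\to0$ is immediate there, with an explicit $O(k^{-2})$ rate.

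You instead stay in the abstract model the proposition is actually stated in, $\phi(q_j)=g(t_j+\rho_j\mathbf 1[x\in D])+\varepsilon_j$, with one $g$ satisfying only A\ref{as:a1}. There saturation is local: $t_j=k\rho_j$ is small at far queries. So uniformity needs a real argument, and your $\eta$-split supplies one. This frees the conclusion from the KRR heuristic, which the paper itself says is exact only for identical copies. The cost is a slower rate and the extra hypothesis $g'(0)<\infty$, which A\ref{as:a1} does not imply.

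You can drop that hypothesis and the split. Increments of a concave $g$ over consecutive intervals of length $\rho_j$ are nonincreasing, so $(k+1)\bigl[g((k+1)\rho_j)-g(k\rho_j)\bigr]\le g((k+1)\rho_j)-g(0)\le\sup g-g(0)$. This gives $0\le\delta_j\le(\sup g-g(0))/(k+1)$ uniformly in $j$, with no differentiability and no need for $\rho_j\le1$. Similarly, the far-from-$x$ clause follows from $\delta_j\le g(\rho_j)-g(0)$ and continuity of $g$ at $0$.
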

So for (near-)independent queries, small $\boldsymbol\delta$ bounds \emph{every} such attack. This is why the
shield-aware diagonal LiRA and RMIA floor (the theorem is \emph{consistent with}, not proof of, those empirical
results). The bound does \emph{not} cover correlated queries. For those, the
Neyman--Pearson-optimal statistic uses the Mahalanobis distance
$\sqrt{\boldsymbol\delta^\top\Sigma^{-1}\boldsymbol\delta}$, which can exceed
$\lVert\boldsymbol\delta\rVert_2/\sigma$ along a low-variance contrast direction such as
$\phi(x){-}\overline{\phi(\text{aug})}$. That is precisely the residual surface, and the full-covariance and
differencing attacks above are built to exploit it; that they nonetheless collapse to chance on shielded
records is therefore \emph{empirical} evidence that $\boldsymbol\delta$ itself vanishes there, not a
consequence of the bound. For near-duplicates ($\kappa_s{<}\kappa$) this empirical robustness, not the
theorem, is the operative statement.

\paragraph{A different attack family, RMIA.} A skeptic might worry the defense merely evades LiRA's specific
statistic. We therefore also evaluate RMIA~\citep{ZLS24}, which uses a fundamentally different
test. Instead of a per-target Gaussian on the target's own loss, RMIA computes the online pairwise
likelihood ratio of the target against a \emph{population} of reference points $z$, scoring
$\Pr_{z}[\,(p_\theta(x)/\bar p(x))/(p_\theta(z)/\bar p(z)) > \gamma\,]$ (our port of the official implementation, $\gamma{=}2$,
bystanders as the population). On the same vulnerable-$24$ records RMIA is a working but \emph{weaker} attack
than LiRA on this subset (AUC $0.68$, TPR@$1\%$FPR $0.09$, versus naive LiRA's $0.80/0.17$;
Figure~\ref{fig:f4}). On shielded records it too is driven to chance, with AUC $0.48$
(neardup) and $0.50$ (copies), with TPR@$1\%$FPR $=0.00$. That a second,
structurally different attack family, even if weaker here, fails on the same shields indicates the effect
is not specific to LiRA's statistic.

\section{Secondary reading: the family intervention as exact-record camouflage}
\label{app:secondary}

This appendix collects experiments that read the controlled family intervention of \S\ref{sec:vision} from the
contributor's side. These results are \emph{not} part of our main
claims. As stated in \S\ref{sec:related}, any defensive interpretation is strictly secondary, and the mechanism is
dual use (Ethics statement). The same data-side addition that suppresses exact-record evidence for a
contributor is available to any party that can place records in a training corpus, which is precisely why
model-only verdicts should not be treated as proof of exact inclusion. We report the protective effect, its cost,
and its limits so that readers can judge how cheaply exact-record evidence can be moved, not to propose a
privacy defense. Content presence remains detectable throughout (\S\ref{sec:contentpresence}).

\subsection{Truth Serum poisoning}
\label{sec:truthserum}

Truth Serum~\citep{TSJLJHC22} is the offensive mirror of shield data. An adversary with the ability to inject a few
points inserts \emph{mislabeled replicas} of the target to \emph{amplify} its leakage (Chameleon~\citep{CSOU24}
does the same for label-only attacks via adaptive poisoning). We reproduce it
($k{=}8$ clean copies of each target with a wrong label) and confirm amplification. Vulnerable advantage
rises by $+0.24^*$; indeed our confusion arm (\S\ref{sec:trich}) is the same
mechanism, which is why it backfires. We then ask whether an added faithful family counteracts such an
attacker. Adding neardup shields on top of the Truth-Serum poison reduces the advantage by $-0.34^*$ versus
the no-poison baseline, turning the attacker's $+0.24$ amplification into net protection, close to the
unpoisoned shield's $-0.44$. Faithful, correct-label coverage counteracts the mislabeled poison, so even under
active poisoning the exact-record signal falls back close to the unpoisoned shielded level. Exact-record evidence can
thus be pushed in either direction by cheap data-side additions.

\subsection{Why keep the record? A negative result on local influence}
\label{sec:whykeep}

If faithful shields make $x$'s content persist, the natural challenge is why the real $x$ should be kept at all.
In a separate $64$-shadow run of the cluster assay (Figure~\ref{fig:f12}; hence exact-record advantages here read
$0.15$ vs.\ that figure's $0.14$), we measure the marginal value of keeping $x$ as
$\text{margin}(x{+}\text{shields})-\text{margin}(\text{shields-only})$ on $x$ and on $16$ held-out neighbors of
$x$ (fresh augmentations, never used as shields), paired over targets. At high fidelity ($k{=}8$ neardup) $x$ is
\emph{task-redundant}; keeping it changes accuracy and margin by $\Delta{\approx}0$ (n.s.), because the shields
already saturate its neighborhood, which is the substitution mechanism of \S\ref{sec:verdict} seen from the
utility side. On these local metrics, withholding $x$ and keeping its shields matches $x{+}$shields while making
exact membership definitively false ($0$ vs $0.15$). The margin gain from keeping $x$ reappears only as fidelity
falls ($+0.86^*$ at half-fidelity, $+1.23^*$ at quarter-fidelity, target-only bootstrap), and it is essentially
$x$'s counterfactual self-influence, the very quantity MI exploits, so the two axes are two views of one number.
We do not measure any downstream-task advantage from keeping $x$; where the literal record must be trained
(e.g.\ under a contractual provenance obligation verified out of band), that requirement is external to these
metrics.

\subsection{DP-SGD comparison details}
\label{app:dp}
\label{sec:dp}

We compare the shield's exact-record effect with DP-SGD on the targeted records. This is a comparison of
mechanisms, not a proposal to replace DP. Against a \emph{tuned} DP-SGD~\citep{ACGMMTZ16}
baseline (Opacus~\citep{YSSTPMNGBZCM21}, GroupNorm, augmentation, tuned at $\epsilon{=}8$, swept
$\epsilon\in\{1,\dots,64\}$, strong evaluation per; Table~\ref{tab:dp}),
\emph{no} operating point reaches the shield's corner (advantage $0.11$ at accuracy $0.90$), because a targeted
data intervention escapes the global accuracy-vs-noise tradeoff; a non-private GroupNorm control reaches advantage
$0.37$, so part of DP's effect is architectural. We do \emph{not} claim Pareto-dominance. DP gives an all-record
worst-case guarantee whereas the shield helps only its $48$ targets, and the full \citet{DBHSB22}
recipe at scale is untested. We claim only that on the targeted records the shield beats every DP-SGD point \emph{we test}.

\begin{table}[h]
\centering\small
\caption{Tuned DP-SGD vs shield on the protected (vulnerable) records, CIFAR-10/ResNet-9. Baseline
vulnerable advantage $0.55$; accuracy on the clean holdout. The non-private control (GroupNorm pipeline,
$\epsilon{=}\infty$) indicates how much of the advantage reduction is architectural versus from privacy.}
\label{tab:dp}
\begin{tabular}{lcc}
\toprule
defense & test acc. & remaining adv. ($\Delta\adv$) \\
\midrule
no defense                 & 0.90 & 0.55 \\
\textbf{shield (neardup)}  & \textbf{0.90} & \textbf{0.11} ($-0.44^*$) \\
non-private (GN, $\epsilon{=}\infty$) & 0.80 & 0.37 ($-0.18$) \\
DP-SGD $\epsilon{=}64$     & 0.68 & 0.31 ($-0.24^*$) \\
DP-SGD $\epsilon{=}16$     & 0.62 & 0.23 ($-0.32^*$) \\
DP-SGD $\epsilon{=}8$      & 0.57 & 0.21 ($-0.34^*$) \\
DP-SGD $\epsilon{=}4$      & 0.51 & 0.24 ($-0.31^*$) \\
DP-SGD $\epsilon{=}1$      & 0.30 & 0.27 ($-0.27^*$) \\
\bottomrule
\end{tabular}
\end{table}

Along the $\epsilon$ sweep, accuracy and protection trade off as they must. Larger $\epsilon$ buys accuracy
($0.57{\to}0.62{\to}0.68$ at $\epsilon{=}8,16,64$) but advantage rises back toward the non-private $0.37$
($0.21{\to}0.23{\to}0.31$; Table~\ref{tab:dp}). A truncated run of the high-accuracy \citet{DBHSB22}
recipe (wide WRN-16-4, augmentation multiplicity, $120$ epochs, a budget tractable for a shadow-model study)
reaches the same $0.57$ accuracy at $\epsilon{=}8$ with a vulnerable-target advantage no lower than the simpler
DP baseline, so it does not reach the shield's corner either; its published headline (${\sim}0.81$) needs a
hundreds-of-epochs regime that a per-record LiRA study of dozens of models cannot afford. We therefore claim
only that every DP configuration we \emph{successfully evaluated} stays off the $0.90/0.11$ corner, not that a
better architecture could not improve accuracy at fixed $\epsilon$.

\subsection{Utility and bystander effects}
\label{app:util}

Across all arms, global test accuracy is statistically unchanged
from baseline, and the target's own accuracy when OUT is preserved for every \emph{same-label} arm (e.g.,
OUT-accuracy $\geq 0.97$ in the fidelity sweep); only confusion harms it. Critically, the \emph{bystander}
(non-shielded) targets show $\Delta\adv \approx 0$ in every arm and dataset; shielding one record does not
expose others. This is a powered null, not absence of data. Over the $256$ bystander targets the neardup
effect is $\Delta\adv{=}{-}0.00$ with a $95\%$ CI half-width of $\pm0.02$, so any externality larger than
${\sim}0.02$ would have shown up on average. Unlike the onion effect's removal, additive shielding has no
measured \emph{average} externality on random bystanders; we do not test neighborhood-matched, rare, or
worst-case-tail records, where a localized defense would most plausibly act, so this bounds the average, not
the worst case.

\section{Supplementary figures}
\label{app:figs}

Figures~\ref{fig:f11} and~\ref{fig:f10} give the mechanism detail for \S\ref{sec:mechanism};
Figures~\ref{fig:f7} and~\ref{fig:f4} support the budget sweep (\S\ref{sec:budget}) and the
adaptive-attack analysis (\S\ref{sec:adaptive}).

\begin{figure}[t]
\begin{minipage}[t]{0.48\columnwidth}
  \centering
  \includegraphics[width=\linewidth]{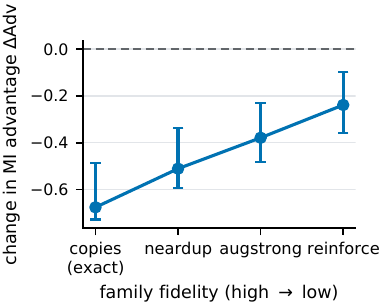}
  \caption{Controlled fidelity ordering, online augmentation disabled. Exact copies reduce exact-record
advantage most, then near duplicates, strong augmentations, and reinforcement variants; the ordering isolates
stored family fidelity from online augmentation.}
  \label{fig:f2}
\end{minipage}\hfill
\begin{minipage}[t]{0.48\columnwidth}
  \centering
  \includegraphics[width=\linewidth]{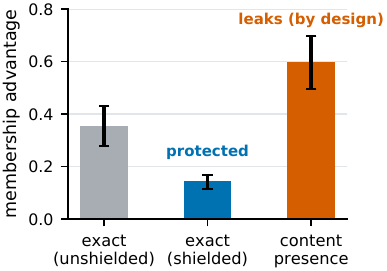}
  \caption{What the family intervention changes. Faithful families weaken \emph{exact-record} inference (advantage
$0.35\!\to\!0.14$) while \emph{content presence} stays easily detectable ($0.60$, scoring $\phi(x)$ with $x$
absent), because the family makes $x$'s content persist ($48$ shielded targets, CIFAR-10).}
  \label{fig:f12}
\end{minipage}
\end{figure}

\begin{figure}[tbp]
\centering
\begin{subfigure}[t]{0.48\columnwidth}
  \centering
  \includegraphics[width=\linewidth]{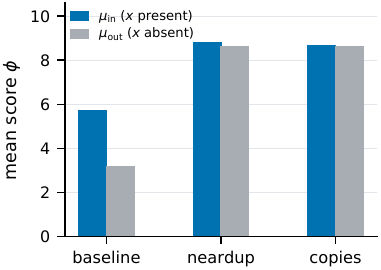}
  \caption{Score collapse}
\end{subfigure}\hfill
\begin{subfigure}[t]{0.48\columnwidth}
  \centering
  \includegraphics[width=\linewidth]{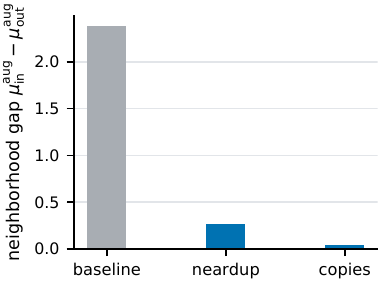}
  \caption{Neighborhood gap}
\end{subfigure}
\caption{Mechanism. (a) Shields raise the OUT-world score $\mu_{\rm out}$ to meet $\mu_{\rm in}$, collapsing
the IN/OUT gap ($d'$, $1.34\!\to\!0.10\!\to\!0.04$). (b) The target's neighborhood (its augmentations)
becomes confident and membership-invariant under shielding.}
\label{fig:f11}
\end{figure}

\begin{figure}[t]
\begin{minipage}[t]{0.48\columnwidth}
  \centering
  \includegraphics[width=\linewidth]{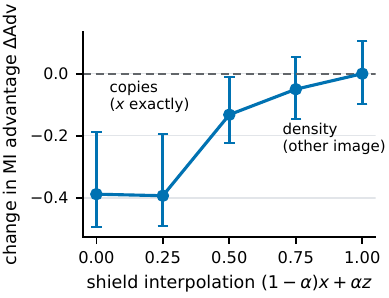}
  \caption{Causal test. As shields interpolate from the exact target ($\alpha{=}0$) toward a random same-class
image ($\alpha{=}1$), protection decays monotonically to the density null, so proximity to $x$ is the lever.}
  \label{fig:f10}
\end{minipage}\hfill
\begin{minipage}[t]{0.48\columnwidth}
  \centering
  \includegraphics[width=\linewidth]{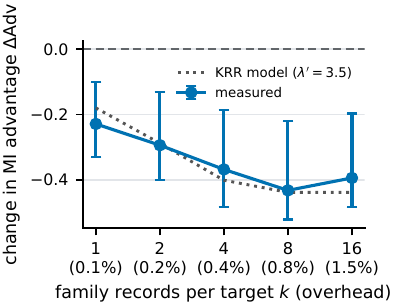}
  \caption{The exact-record advantage reduction grows with the shield count per target and saturates near
$k{=}8$; even one near-duplicate ($0.1\%$ overhead) yields a significant reduction. Vulnerable targets,
CIFAR-10/ResNet-9, $95\%$ CI.}
  \label{fig:f7}
\end{minipage}
\end{figure}

\paragraph{Calibration--power tradeoff vs.\ the assumed family prior (Fig.~\ref{fig:prior}).} The nuisance-aware
auditor of \S\ref{sec:composite} controls FPR on the composite exact-record null by assuming a family prior
$\pi{=}P(F{=}1)$ and calibrating one threshold $\tau$ on the mixture $(1{-}\pi)H_{00}+\pi H_{01}$; $\pi{=}0$ is the
clean-reference auditor and $\pi{=}1$ the minimax (worst-case-$H_{01}$) auditor of \S\ref{sec:composite}. Sweeping
$\pi$ over $[0,1]$ shows the section's two failure modes are endpoints of a strict tradeoff. In \emph{text}
(pythia-2.8b), the worst-case FPR over $\{H_{00},H_{01}\}$ falls from $0.88$ ($\pi{=}0$, false attribution) to
$0.05$ ($\pi{=}1$, controlled) exactly as power on the \emph{family-absent} exact alternative $H_{10}$ collapses
from $0.34$ to $0$ (and on $H_{11}$ from $0.99$ to $0.17$). No prior both controls the family-present null and
keeps exact-record power. In \emph{vision} power stays ${\approx}\alpha$ ($0.05$--$0.07$) for \emph{every}
$\pi$, because faithful coverage has removed the exact bit, so no calibration can recover it.

\begin{figure}[t]
\begin{minipage}[t]{0.48\columnwidth}
  \centering
  \includegraphics[width=\linewidth]{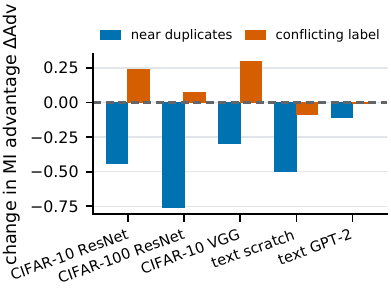}
  \caption{Faithful-coverage protection holds across datasets, architectures, and modalities (including a
finetuned $124$M GPT-2). The confusion backfire is vision-specific; in text it is null.}
  \label{fig:f3}
\end{minipage}\hfill
\begin{minipage}[t]{0.48\columnwidth}
  \centering
  \includegraphics[width=\linewidth]{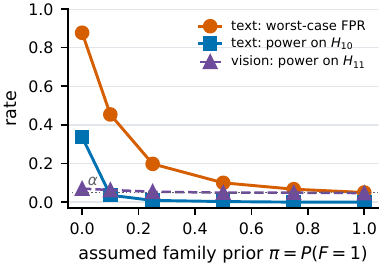}
  \caption{Sweeping the assumed family prior $\pi$ (\S\ref{sec:composite}). In text, worst-case FPR control
(orange) drives $H_{10}$ power (blue) to $0$ as $\pi\!\to\!1$; in vision, $H_{11}$ power (purple) stays
${\approx}\alpha$.}
  \label{fig:prior}
\end{minipage}
\end{figure}

\paragraph{Oracle four-world GLRT (not a monotone-threshold artifact).}
\label{app:glrt}
The auditors of \S\ref{sec:composite} threshold the scalar $\phi$. To rule out that a \emph{monotone} threshold is
simply the wrong test once there are four score distributions, we give the auditor a generalized likelihood-ratio
test with every advantage. We fit per-world multivariate Gaussians $p(z\mid H_{00}),\dots,p(z\mid H_{11})$ on a
feature vector $z$ (loss $\phi$ and, for text, Min-K\%), and score the composite exact-record ratio
$\mathrm{LR}(z){=}[p(z\mid H_{10}){+}p(z\mid H_{11})]/[p(z\mid H_{00}){+}p(z\mid H_{01})]$ (equal $F$ prior), fitting
the densities on half the shadow seeds and evaluating the ratio's AUC on the \emph{held-out} half (so the number is
honest, not fit${=}$eval). The GLRT does \emph{not} change the picture. In \emph{vision} it separates $E$ at AUC
$0.55$ (vs.\ $0.53$ for the monotone threshold), still chance, so the exact bit is genuinely \emph{absent} rather
than hidden by a poor threshold; in \emph{text} it reaches $0.74$ (vs.\ $0.69$; the Min-K\% feature adds the
$0.02$), confirming that text \emph{retains} exact information but that recovering it requires resolving the latent
family state. The substitutive/additive split is thus robust to the form of the test.

\paragraph{Partial family awareness is insufficient in text.}
\label{app:recall}
The family-conditioned auditor of \S\ref{sec:method} is an oracle that reproduces a target's \emph{entire}
near-duplicate family in its calibration shadows; a realistic auditor reproduces only part of it. We sweep the
number $m$ of siblings the auditor's shadows reproduce (pythia-2.8b/CC-News natdup at $\max$-sib${=}m$; $m{=}8$ is
the deployed full family), calibrate its family-present threshold there at $\alpha{=}5\%$, and measure the
false-attribution rate and power on the \emph{deployed} full-$8$-sibling model (exact-given-family AUC $0.90$).
False attribution stays far above $\alpha$ until the auditor reproduces at least half of the family.
The FPR is $0.70/0.19/0.06/0.05$ at $m{=}1/2/4/8$ over $12$ seeds per setting.
Power on true members is $0.99/0.84/0.56/0.50$, and $\mathrm{LR}^{+}$ is $1.40/4.50/8.80/10.00$.
Because the exact bit is
\emph{additive} in text (\S\ref{sec:verdict}), \emph{full} family recall controls the FPR while retaining
power, which is the sound family-conditioned outcome; in vision, power stays ${\approx}\alpha$ at every $m$.
Nominal calibration, however, requires reproducing at least half of the family ($m{=}4$), and enumerating every
natural duplicate of a target is infeasible at corpus scale. A realistic text audit therefore operates at low $m$
and false-attributes. This is the quantitative form of the abstention argument (\S\ref{sec:abstain}). Min-K++
calibration is weaker and noisier (deployed AUC $0.69$), so we report the suffix-loss auditor.

\section{Experiment map and reconciliation of reported leakage numbers}
\label{app:expmap}
The exact-record signal is reported under several constructions, and the numbers are \emph{not} interchangeable.
Table~\ref{tab:expmap} maps every headline value to the three canonical comparisons of \S\ref{sec:method}
(\emph{standard exact}, \emph{exact-given-family}, \emph{family presence}), plus \emph{content presence}
$C{=}E{\vee}F$ ($H_{00}$ vs.\ the other three worlds), and to its
measurement coordinates, namely target subset (vulnerable-$24$ = top-$50\%$ by discovery advantage; all-$48$ = every
shielded target), faithful-coverage $k$, the score, and the estimator (a \emph{pooled} one-threshold LiRA log-LR
audited by AUC, vs.\ a \emph{per-target} advantage averaged over targets). This factorization also positions the
audit relative to recent proposals that respond to near-duplicates by relaxing membership itself to a semantic
neighborhood~\citep{MHLMFJPC25}: rather than replacing the exact-record question with a coarser one, the
four-world design reports the exact bit and the family bit separately, and \citet{MHLMFJPC25} show that even the
relaxed definition remains unreliable under dataset manipulation. Values \emph{within} a quantity differ
by these coordinates and are not meant to coincide numerically; they should only share sign and order of magnitude. In
particular the four vision \emph{exact-given-family} numbers ($0.11$, $0.03$, $0.14$, $0.09\!\to\!0.01$) are one
near-chance quantity seen through four lenses (subset, pooling, score, and $k$), not a discrepancy; the single text
\emph{exact-given-family} value ($0.50$) sits an order of magnitude above all of them, reflecting the substitutive/additive split
(\S\ref{sec:verdict}). ``Advantage'' is $2|\mathrm{AUC}-0.5|$; where a construction is naturally read as a raw AUC
(the four-world audits) we give AUC and its advantage in parentheses.

\begin{table}[t]
\centering\footnotesize
\caption{Experiment map. Every reported exact/family leakage value, keyed to a canonical quantity ($E$=exact-record
bit, $F$=family presence) and its measurement coordinates. Arrows are defense effects (unshielded$\to$shielded, or
coverage $k$ swept). The numbers are consistent once the coordinates are read. For example, vision exact-given-family is
near chance under every lens, while text retains an order-of-magnitude-larger signal. Vision uses
CIFAR-10/ResNet-9, and text uses pythia-2.8b/CC-News with the suffix NLL score.}
\label{tab:expmap}
\resizebox{\textwidth}{!}{%
\begin{tabular}{@{}lllll@{}}
\toprule
Canonical quantity & value & modality, subset ($k$) & score, estimator & appears \\
\midrule
\textbf{Standard exact} & $0.55$ & vision, vuln-24 & LiRA log-LR, per-target adv. & \S\ref{sec:vision} (baseline) \\
$H_{10}$ vs.\ $H_{00}$ & $0.35$ & vision, all-48 & confidence, per-target adv. & Fig.~\ref{fig:f12} (unshielded) \\
\;($x$ alone, family absent) & $0.99$ & text, all shielded & suffix NLL, per-target adv. & \S\ref{sec:scaleup} \\
& $0.74$ (AUC $0.87$) & text & suffix NLL, pooled AUC & Tab.~\ref{tab:fourworld} \\
\midrule
\textbf{Exact-given-family} & $0.11$ & vision, vuln-24 ($k{=}8$) & LiRA log-LR, per-target adv. & \S\ref{sec:vision} (neardup) \\
$H_{11}$ vs.\ $H_{01}$ & $0.03$ (AUC $0.52$) & vision, pooled ($k{=}8$) & LiRA log-LR, pooled AUC & Tab.~\ref{tab:fourworld}, \S\ref{sec:verdict} \\
\;(family present, & $0.14$ ($0.15$ in App.~\ref{sec:whykeep}) & vision, all-48 & confidence, per-target adv. & Fig.~\ref{fig:f12} (shielded) \\
\;\;toggle $x$) & $0.09\!\to\!0.01$ & vision, all-48 ($k{=}1\!\to\!16$) & LiRA log-LR, pooled & Fig.~\ref{fig:frontier} (frontier) \\
& $0.50$ (AUC $0.75$) & text & suffix NLL, pooled AUC & Tab.~\ref{tab:fourworld}, \S\ref{sec:verdict} \\
\midrule
\textbf{Family presence} & $0.60$ & vision, all-48 & confidence, per-target adv. & Fig.~\ref{fig:f12} \\
$H_{01}$ vs.\ $H_{00}$ & $0.38$ (AUC $0.69$) & vision, all-48 & log-LR, pooled AUC & Tab.~\ref{tab:fourworld} \\
\;($x$ absent, toggle family) & AUC $0.98$ & text & suffix NLL, pooled AUC & Tab.~\ref{tab:fourworld}, \S\ref{sec:contentpresence} \\
 & $\mathrm{LR}^{+}\,3.20$ / $8.10$ & vision / text (WikiText) & log-LR verdict, $5\%$ FPR & App.~\ref{app:complete-results} \\
\midrule
\textbf{Content presence} & AUC $0.65$ & vision & log-LR, pooled AUC & this table only \\
$C{=}E{\vee}F$ & AUC $0.95$ & text & suffix NLL, pooled AUC & this table only \\
\bottomrule
\end{tabular}
}
\end{table}

\section{Text erasure frontier across fidelity and exposure}
\label{app:textfrontier}

Vision faithful copies \emph{erase} the exact-record verdict near-totally (aware $\mathrm{LR}^{+}\,1.00$, $88\%$ of
targets driven to advantage ${\le}0.05$ at $k{=}8$; \S\ref{sec:verdict}). Does the same hold in text, and is the
natural passages' non-erasure (\S\ref{sec:natdup}) merely their mid-fidelity? We answer both by sweeping sibling
\emph{fidelity} and \emph{exposure} independently on the $56$ WikiText targets, constructing optimization-free
near-verbatim siblings (a controlled fraction of tokens resampled from the corpus unigram distribution) at token
$5$-gram Jaccard $\in\{0.59,0.82,0.96\}$ and total exposure $E=n_{\text{sib}}\!\cdot\!K_{\text{shield}}\in\{2,8,16\}$,
always present while the exact target's membership is toggled; the auditor is the per-target aware LiRA of
\S\ref{sec:method} ($96$ shadows/arm, clean reference $\mathrm{LR}^{+}\,66.00$, advantage $0.90$).

\begin{table}[t]
\centering\small
\caption{Aware exact-record verdict on text as sibling fidelity (Jaccard) and exposure ($E$) vary.
WikiText/GPT-2 uses $\alpha{=}1\%$, with clean-reference $\mathrm{LR}^{+}\,66.00$ and advantage $0.90$.
The CIFAR $k{=}8$ reference has $\mathrm{LR}^{+}\,1.00$ and $88\%$ erased. Exposure, not fidelity, is the lever. At $E{=}2$, raising Jaccard
$0.59\!\to\!0.96$ barely moves the verdict; erosion appears only at high exposure and remains \emph{partial}
even at $16$ near-verbatim copies, and never reaches vision's near-total erasure.}
\label{tab:textfrontier}
\setlength{\tabcolsep}{5pt}
\begin{tabular}{@{}lccc@{}}
\toprule
 & $E{=}2$ & $E{=}8$ & $E{=}16$ \\
\midrule
Jaccard $0.59$ & $\mathrm{LR}^{+}\,58.00$ (adv $0.85$) & -- & -- \\
Jaccard $0.82$ & $\mathrm{LR}^{+}\,56.00$ (adv $0.81$) & $\mathrm{LR}^{+}\,23.00$ (adv $0.62$) & $\mathrm{LR}^{+}\,10.00$ (adv $0.39$) \\
Jaccard $0.96$ & $\mathrm{LR}^{+}\,65.00$ (adv $0.79$) & -- & $\mathrm{LR}^{+}\,7.00$ (adv $0.34$) \\
\bottomrule
\end{tabular}
\end{table}

Table~\ref{tab:textfrontier} shows the outcome. At matched low exposure ($E{=}2$) fidelity is nearly inert. Aware
$\mathrm{LR}^{+}$ moves only $58.00\!\to\!65.00$ across Jaccard $0.59\!\to\!0.96$, so the natural passages' survival is
\emph{not} explained by their fidelity; a near-verbatim sibling at the same exposure is no more hiding than a
mid-fidelity one. Erosion is driven by \emph{exposure}. Holding Jaccard $0.82$ fixed, $E{=}2\!\to\!16$ drives aware
$\mathrm{LR}^{+}\,56.00\!\to\!10.00$ (advantage $0.81\!\to\!0.39$). But even the extreme cell, $16$ copies of a
Jaccard-$0.96$ sibling, only \emph{partially} erodes the verdict ($\mathrm{LR}^{+}\,7.00$, advantage $0.16$--$0.34$,
$11\%$ of targets erased), far from vision's near-total collapse at $k{=}8$. Across every cell the \emph{unaware}
auditor is driven to fabrication (FPR ${\ge}0.85$). We read this as a substitutive/additive split. Autoregressive
suffix-loss memorization pins the \emph{exact} token sequence, so a distribution-aware auditor keeps a residual
grip on the true member that near-duplicates dislodge only slowly, whereas vision's boundary memorization is fully
supplied by faithful copies. In text the operative corruption of the exact-record verdict is therefore
\emph{fabrication} (universal, one to a few duplicates suffice), not \emph{erasure} (weak, exposure-gated, partial);
the fidelity-ordered mechanism (\S\ref{sec:model}) is shared across modalities but its reach into the exact-record
verdict is not.

\paragraph{A matched-exposure control shows that memorization is exposure-additive.} A naive fidelity sweep is confounded by
exposure, because inserting the target $K$ extra times makes it detectable by count alone. We therefore control for it, with one
target insertion against \emph{eight} sibling copies always present, varying only sibling fidelity. Even with eight
\emph{exact} copies present (the exact-copy multiplicity endpoint, $8$ vs.\ $9$), adding the exact instance once still yields exact-given-family advantage $0.74$
(WikiText/GPT-2, $85$ passages), barely below a different-suffix sibling ($0.83$) and matching the $0.75$ at
pythia-2.8b scale (\S\ref{sec:verdict}), far above vision's $0.03$ (\S\ref{sec:verdict}). Eight identical copies
do \emph{not} saturate the exact instance's contribution, and the per-token residual
$\Delta_t=\mathbb{E}[\ell_t\mid H_{01}]-\mathbb{E}[\ell_t\mid H_{11}]$ is spread roughly uniformly across the suffix
(mean $0.08$, flat; Fig.~\ref{fig:tokenresidual}) rather than concentrated at distinctive tokens. Autoregressive
suffix-loss memorization is thus \emph{exposure-additive}; near-duplicate coverage cannot substitute the exact
sequence, which is the mechanism behind the text-vs-vision asymmetry (vision's boundary memorization \emph{is}
saturated by faithful copies, collapsing the exact bit to chance).

\begin{figure}[t]
\centering
\includegraphics[width=0.5\columnwidth]{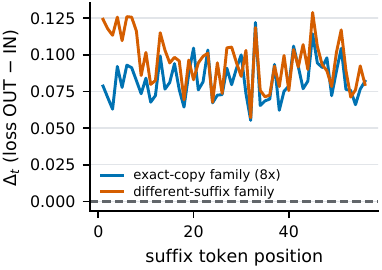}
\caption{Per-token residual $\Delta_t=\mathbb{E}[\ell_t\mid\text{target OUT}]-\mathbb{E}[\ell_t\mid\text{target
IN}]$ across the $56$ suffix positions, with eight sibling copies always present (matched exposure). The exact
target's presence lowers loss \emph{across the whole suffix} rather than at localized tokens. This distributed
occurrence-count signal is consistent with exposure-additive memorization.}
\label{fig:tokenresidual}
\end{figure}

\section{Details of the CC-News scale-up and deduplication}
\label{app:scaleup}

\paragraph{Setup.} We stream CC-News (the \emph{vblagoje/cc\_news} release), tokenize with the model tokenizer to fixed
$256$-token passages (exact duplicates removed), and mine natural near-duplicate families by token-$5$-gram
MinHash ($P{=}128$, $32$ bands) + LSH + union-find (\S\ref{sec:natdup} pipeline). The family-size distribution
over $40$k passages contains $1726$ families at Jaccard $0.5$, $921$ at $0.8$, $508$ at $0.9$ (median shielded
best-sibling Jaccard $0.89$). A \emph{standard} near-dedup~\citep{LINZECC22} removes exact duplicates
and near-duplicates at Jaccard ${\ge}0.8$; we keep the $58$ \emph{dedup-surviving} targets (best sibling in
$[0.5,0.8)$), $141$ \emph{dedup-removed} ($\ge0.8$), and $100$ no-sibling bystanders, plus a $6$k-passage
deduped background; one of the $200$ families has best-sibling Jaccard below $0.50$ and is excluded from the
banded analyses. We full-finetune pythia-2.8b (bf16 AMP, gradient checkpointing, AdamW $10^{-5}$, single
target exposure) over $128$ shadows ($64$ \emph{none} / $64$ \emph{natdup}), toggling each exact target's
membership while its natural siblings are always present in \emph{natdup}.

\paragraph{Calibrated verdict.} Per-target LiRA log-LR, threshold calibrated to $\alpha$-FPR on calibration OUT,
FPR measured out of sample with a crossed bootstrap over calibration seeds, evaluation seeds, and targets.

\begin{table}[h]
\centering\small
\setlength{\tabcolsep}{4pt}
\caption{Calibrated CC-News exact-record verdicts. Thresholds are calibrated on family-absent non-members; the unaware and aware auditors are evaluated with natural near-duplicate families present. LR$^{+}$ is the crossed-bootstrap median of per-resample TPR/FPR and need not equal the ratio of the printed point rates.}
\label{tab:ccnews-verdict}
\begin{tabular}{@{}llccc@{}}
\toprule
$\alpha$ & auditor & TPR & FPR & $\mathrm{LR}^{+}$ \\
\midrule
$1\%$ & (A) clean $\to$ clean & $1.00$ & $0.02$ & $\mathbf{61.67}$ \\
$1\%$ & (B) unaware $\to$ natdup & $1.00$ & $\mathbf{1.00}$ & $\mathbf{1.00}$ \\
$1\%$ & (C) aware $\to$ natdup & $0.64$ & $0.02$ & $28.36$ \\
$5\%$ & (B) unaware $\to$ natdup & $1.00$ & $1.00$ & $1.00$ \\
$5\%$ & (C) aware $\to$ natdup & $0.78$ & $0.06$ & $13.69$ \\
\bottomrule
\end{tabular}
\end{table}

The unaware verdict is fabricated ($\mathrm{LR}^{+}\,{\to}\,1$); the aware auditor retains power ($\mathrm{LR}^{+}\,28$,
paired advantage $0.90\!\to\!0.75$, only $2\%$ of targets driven below advantage $0.05$). Text is therefore not erased
(App.~\ref{app:textfrontier}), but awareness requires infeasible enumeration. Family presence stays a usable
verdict ($\mathrm{LR}^{+}\,9.1$ at $\alpha{=}1\%$, $7.3$ at $5\%$; bootstrap-median). A four-world audit (over $\{$x-only, dup-only, both, neither$\}$)
recovers family presence (dup-only vs.\ neither, AUC $0.98$) far better than exact inclusion given the family
(both vs.\ dup-only, AUC $0.75$). The exact bit is thus partially recoverable, unlike vision's near-chance $0.52$
(\S\ref{sec:verdict}), which is the substitutive/additive split again.

\paragraph{Fabrication by dedup band.} A realistic unaware auditor's false-``member'' rate on genuine
non-members, by whether a standard MinHash dedup ($\text{Jaccard}\,{\ge}\,0.80$) would remove the siblings. The
\emph{surviving} band is not removed by dedup yet still fabricates; bystanders (no siblings) are the control.

\begin{table}[h]
\centering\small
\setlength{\tabcolsep}{5pt}
\caption{False-member rates for CC-News exact non-members by natural-family Jaccard band. The threshold is calibrated on family-absent non-members at each nominal FPR $\alpha$.}
\label{tab:ccnews-jaccard-bands}
\begin{tabular}{@{}lcccc@{}}
\toprule
target band & base adv. & $\alpha{=}1\%$ & $\alpha{=}5\%$ & $\alpha{=}10\%$ \\
\midrule
\textbf{dedup-surviving} ($J\,0.50$--$0.80$) & $0.98$ & $\mathbf{0.55}$ & $\mathbf{0.85}$ & $\mathbf{0.91}$ \\
dedup-removed ($J{\ge}0.80$)               & $0.99$ & $0.63$ & $0.89$ & $0.95$ \\
bystander (no siblings, control)          & $1.00$ & $0.00$ & $0.01$ & $0.03$ \\
\bottomrule
\end{tabular}
\end{table}

Min-K\%~\citep{SAXHLBCZ24} reproduces the surviving-band fabrication ($0.59$/$0.82$ at $\alpha{=}1\%$/$5\%$),
so it is not an artifact of the loss score.

\paragraph{Neighbourhood-comparison attack (\S\ref{sec:scaleup}).} The most semantically relevant baseline for text MI~\citep{MMJSSB23} scores $\phi_{\mathrm{nbr}}(x){=}\overline{\mathrm{NLL}(\text{$8$ Qwen paraphrases of }x)}-\mathrm{NLL}(x)$.
This score normalizes the suffix loss with the target's \emph{own} paraphrase set.
On matched four-world shadows with $6$ \emph{none} and $6$ \emph{natdup} seeds and $1{,}800$ scores each, it does not escape the confound.
Under a common pooled clean-reference calibration at $\alpha{=}5\%$, its false attribution on family-present non-members is $0.58$ versus $0.99$ for the raw loss.
The result is lower but still ${\gg}\alpha$.
The lexical family lowers $\mathrm{NLL}(x)$ but not the loss of the untrained paraphrase neighbors, so normalization removes only part of the shift.
It also weakens the exact-given-family signal from AUC $0.90$ to $0.58$, while standard family-absent MI stays strong at $1.00$ to $0.92$.
Normalizing by neighbors trades exact-record power for false attribution that it only partly suppresses.

\paragraph{Executed dedup-and-retrain.} The partition above holds the training set fixed. As the \emph{direct} test
we instead \emph{remove} every best-sibling-Jaccard-${\ge}0.8$ family from the corpus and \emph{re-finetune}
pythia-2.8b on the resulting $\tau{=}0.8$-deduplicated corpus ($64$ \emph{natdup} shadows; the family-absent worlds
are unchanged, so the \emph{none} shadows are reused). Fabrication does not abate; it \emph{sharpens}, because the
confounding high-Jaccard families of \emph{other} targets are now absent from training. The \emph{surviving}
families ($J{<}0.8$, which a $\tau{=}0.8$ pass keeps) are false-attributed at $0.92$/$0.99$ (5\% FPR,
$[0.50,0.70)$/$[0.70,0.80)$ bands; $0.96$/$1.00$ at $10\%$), \emph{above} the fixed-corpus partition estimate
($0.85$), while the exact-given-family signal is \emph{retained and if anything sharpened} (four-world AUC $0.90$
on the full corpus $\to$ $0.97$ after the $\tau{=}0.8$ dedup, over the $200$ CC-News families of this
strong-memorization finetune; naive $H_{10}$ vs $H_{00}$ $\approx1.0$). The main-text $0.75$ (Tab.~\ref{tab:fourworld})
is a \emph{separate}, weaker-memorization pythia run (naive $0.87$); exact-given-family tracks how strongly the
finetune memorizes, but is well above chance in every run.
Even the \emph{removed} band ($J{\ge}0.8$, whose closest
siblings are deleted) still fabricates at $0.66$--$0.71$ (5\% FPR) from residual sub-threshold near-duplicates. So
the failure survives an \emph{executed} deduplicate-and-retrain, not merely a measured-Jaccard partition of a fixed
model.

\paragraph{LoRA ablation.} To confirm the effect is not specific to full finetuning of pythia, we repeat the test
with LoRA finetuning ($r{=}16$, $10.9$M trainable parameters) of a different-architecture $2$B model (Qwen3.5-2B), on
the same CC-News families ($48$ shadows). Baseline exact-record MI is again strong (advantage $1.00$), and the
surviving-band fabrication reproduces. The unaware auditor's false-``member'' rate on dedup-surviving non-members
rises to $0.16$/$0.37$/$0.77$ at $\alpha{=}1\%$/$5\%$/$10\%$ (removed band $0.16$/$0.39$/$0.78$), while bystanders
stay clean ($0.00$). The magnitude is lower than pythia's full finetuning at the $1\%$ operating point, but the
effect, dedup-surviving-band fabrication with a clean bystander control, holds across architecture and
finetuning method.

\section{Semantic-family experiments, the objective control, and the lexical confound}
\label{app:semantic}
\paragraph{Paraphrase families.} To decouple \emph{meaning} from \emph{tokens} we build \emph{semantic} families.
For each shielded target we detokenize it, prompt Qwen2.5-7B (few-shot) for candidate rewrites that preserve the
facts, and keep the $k{=}8$ with the \emph{lowest} word-$5$-gram Jaccard to the target. Mean lexical Jaccard is
$0.00$ (AG News) / $0.01$ (CC-News), versus ${\approx}0.9$ for the near-duplicate families. They share the topic but almost
no $n$-grams.

\paragraph{Objective control (\S\ref{sec:objective}).} On AG News we hold the backbone (GPT-2), records, semantic
family, and exposure fixed and swap \emph{only} the objective, comparing GPT-2 with a sequence-classification head against the
same GPT-2 as an autoregressive LM, each in the four-world design ($64$ shadows). Both have a strong family-absent exact
signal (naive $H_{10}$ vs $H_{00}$, classification AUC $0.61$, LM $1.00$). They diverge on \emph{exact-given-family}
($H_{11}$ vs $H_{01}$), with classification at $0.53$ (chance; $R_{\mathrm{exact}}{=}0.27$, $95\%$ CI $[0.19,0.36]$) and the LM at
$1.00$ ($R_{\mathrm{exact}}{=}0.86$, $[0.82,0.89]$). We read the split from AUC and the standardized
$R_{\mathrm{exact}}$ rather than the raw interaction $I$ (both ${\approx}{-}2.5$), because the LM's suffix NLL saturates at a
floor for memorized targets, so the raw mean interaction is a ceiling artifact while identifiability is not. With
everything but the objective fixed, classification erases the exact bit and the LM retains it, so the objective is
causal.

\paragraph{Lexical vs.\ semantic families (\S\ref{sec:objective}).} Re-running the pythia-2.8b four-world with the
CC-News \emph{semantic} paraphrase family in place of the lexical one (natdup arm re-finetuned; family-absent worlds
reuse the \emph{none} shadows; $28$ shadows) collapses the clean-reference fabrication to $\mathrm{FPR}_{H_{01}}{=}0.10$
(${\approx}\alpha$), versus $0.88$ for the lexical family, while exact-given-family stays $1.00$. Paraphrases that
share meaning but not tokens barely move the target's suffix loss in this single-exposure regime, so they neither
fabricate nor mask; the residual $0.10$ is still twice the nominal $\alpha$. In this regime the confound is
predominantly \emph{lexical} and the recovered quantity is near-duplicate \emph{family} presence. The AG-News
objective control (above) shows the boundary is regime-dependent, because under multi-epoch fine-tuning the same
paraphrase construction is detectable at family-presence AUC $0.89$ (\cref{tab:fourworld}).
Concurrent work reaches the complementary conclusion from the adversarial side: \citet{EBCMD26} show that an
accused developer can deliberately paraphrase training data to degrade state-of-the-art MIAs while preserving
utility, with residual leakage that varies across fine-tuning regimes. Their deliberate obfuscation and our
natural and controlled paraphrase families probe the same lexical dependence of the membership signal, and both
observe the regime dependence documented in \S\ref{sec:objective}.

\paragraph{Natural family features and the confound.} The controlled fidelity/$k$ sweeps
(\S\ref{sec:mechanism}) have a natural analogue. On the CC-News families, the OUT-score shift rises monotonically
with best-sibling fidelity (mean shift $1.19/1.22/1.35/1.38$ across the
$[0.50,0.70)$/$[0.70,0.80)$/$[0.8,0.9)$/$[0.9,1.0)$ Jaccard bands; Spearman $+0.19$ of best-sibling Jaccard with
the shift), while false attribution is high in every band (false-``member'' rate $0.80/0.88/0.82/0.92$ at
$\alpha{=}5\%$). Family size adds to the shift once ${\ge}2$ siblings are present.

\section{Additional quantitative results from the full study}
\label{app:complete-results}

This section records the secondary operating points, confidence intervals, and construction statistics from the
full result presentation. They are included for completeness and do not change the headline comparisons in the
main text.

\paragraph{Study scale and natural-family construction.}
For the primary controlled CIFAR-10 experiment, the $48$ targets and $k{=}8$ family records add $384$ records,
or approximately $0.8\%$ of the base training set. Across all experiments, the study trains approximately
$1{,}500$ shadow models; the per-experiment allocation is reported in Table~\ref{tab:shadow-counts}.
For WikiText-103, target-to-sibling token-$5$-gram Jaccard similarity has median $0.60$ and range
$[0.40,0.81]$ after manual verification that the passages contain copied text.

\paragraph{Complete controlled-vision verdict.}
All five decision-level rows of the controlled CIFAR-10 verdict appear in \cref{tab:vision-verdict}. In
particular, the clean-reference auditor evaluated on a conflicting-label family has TPR $0.13$, FPR $0.02$, and
$\mathrm{LR}^{+}{=}5.4$.
On CIFAR-10, a fixed-threshold family verdict reaches $\mathrm{LR}^{+}{=}3.20$ with TPR $0.57$ at nominal $5\%$ FPR.

Under the family-conditioned faithful comparison, per-target advantage declines by $0.37$, with crossed-bootstrap
$95\%$ CI $[0.20,0.52]$, and $88\%$ of the $24$ vulnerable targets fall below advantage $0.05$.
This family-conditioned $0.37$ is not interchangeable with the clean-reference $-0.44$ of \S\ref{sec:trich} or
the independent budget sweep's $-0.43$ at $k{=}8$; the three constructions differ in comparison world and
training run and are expected to agree only in sign and magnitude (App.~\ref{app:expmap}).
In the four-world mean decomposition, family presence changes the vision exact-record effect from
$\delta_E^0{=}1.08$ to $\delta_E^1{=}0.18$; the interaction is $I{=}{-}0.90$ with $95\%$ CI
$[-1.46,-0.32]$, and $R_{\mathrm{exact}}{=}0.23$. WikiText is closer to additive, with
$I{=}{-}0.02$, CI $[-0.06,0.03]$, and $R_{\mathrm{exact}}{=}0.89$; the pythia-2.8B experiment gives
$I{=}{-}0.36$ and $R_{\mathrm{exact}}{=}0.60$.
The conflicting-label arm increases advantage from $0.55$ to $0.79$, a change of $+0.24$.

\paragraph{Mechanism, budget, and generalization details.}
The target's two nearest same-class training neighbors receive an OUT-score shift of $+0.20$, with $95\%$ CI
$[0.05,0.34]$, while their membership advantage changes by only $+0.01$, CI $[-0.02,0.05]$.
In the near-duplicate budget sweep, the advantage change grows from $-0.23$ at $k{=}1$ to $-0.43$ at
$k{=}8$ and saturates at $-0.39$ for $k{=}16$; the kernel model fitted on $k\leq8$ predicts $-0.44$ at
$k{=}16$. With online augmentation, the reinforcement-variant effect weakens from $-0.24$ to $-0.08$.

Before vulnerable-target selection, CIFAR-100 has baseline advantage $0.73$. Near duplicates change the selected
targets' advantage by $-0.76$, the density control remains null, conflicting labels change it by $+0.07$, and
bystanders remain unchanged. On CIFAR-10/VGG-11, near duplicates and conflicting labels change advantage by
$-0.30$ and $+0.30$, respectively. In controlled language modeling, copies and near duplicates change advantage
by $-0.44$ and $-0.50$ for the model trained from scratch, and by $-0.32$ and $-0.11$ for finetuned GPT-2.
Conflicting evidence is null in these text settings, with changes of $-0.09$ and $0.00$.

\paragraph{Complete WikiText operating points.}
With natural siblings present, the family-conditioned auditor retains exact-record advantage approximately
$0.99$ and $\mathrm{LR}^{+}{=}97.00$. At nominal $1\%$ FPR, the clean-reference false-member rate rises from
$1.4\%$ to $81\%$, with $95\%$ CI $[71\%,90\%]$; no-sibling controls have FPR $1.9\%$.
At nominal $5\%$ FPR, the family-present false-member rate is $85\%$ and the positive-verdict
$\mathrm{LR}^{+}$ falls to $1.20$ (from $66.00$ at the clean reference's $1\%$ operating point). Across retraining environments, the observed false-attribution
magnitude ranges from approximately $28\%$ to $85\%$.
With the exact passage absent, toggling family presence gives advantage $1.00$, versus $0.07$ for no-sibling
controls. Its fixed-threshold family verdict has $\mathrm{LR}^{+}{=}8.10$, TPR $0.88$, and realized FPR $0.11$
at nominal $5\%$ FPR.

\paragraph{Complete CC-News operating points.}
The family-absent exact-record advantage is $0.99$. With natural families present, the clean-reference
false-member rate is $99.7\%$ and $\mathrm{LR}^{+}$ falls to $1.00$, compared with $19.60$ in the clean world at
$5\%$ FPR and approximately $62.00$ at $1\%$ FPR. At nominal $5\%$ FPR, the four original best-sibling Jaccard
bands $[0.50,0.70)$, $[0.70,0.80)$, $[0.80,0.90)$, and $[0.90,1.00)$ have false-member rates
$80\%$, $88\%$, $82\%$, and $92\%$, respectively; no-sibling controls remain at $1.2\%$.
Across these bands, family-presence AUC lies between $0.92$ and $0.97$, while the clean-reference
false-member rate lies between $0.59$ and $0.94$.
A fresh pythia-2.8B run gives $98\%$, Min-K\% gives $97\%$, Min-K++ gives $82\%$, and EZ-MIA gives $63\%$
false attribution at the stated $5\%$ operating point. Across the four Jaccard bands, the Min-K variants range
from $86\%$ to $100\%$.

\paragraph{Information-relocation frontier.}
Across $k{=}1$ to $16$, the exact-given-family point estimate falls from $0.09$ to $0.01$ while the
family-presence estimate rises from $0.12$ to $0.48$. The smallest observed maximum of the two advantages is
$0.12$ at $k{=}1$. Neighboring bootstrap intervals overlap, so this finite sweep supports a descriptive
signal-redistribution pattern; it does not establish a precise monotone law or a universal lower bound.
In $k{=}1,2,4,8,16$ order, crossed-bootstrap $95\%$ CIs are respectively
$[0.01,0.21]$, $[0.00,0.17]$, $[0.00,0.15]$, $[0.00,0.12]$, and $[0.00,0.14]$ for exact-given-family
advantage, and $[0.01,0.24]$, $[0.10,0.34]$, $[0.15,0.41]$, $[0.24,0.52]$, and $[0.33,0.62]$ for
family-presence advantage. The Pythia reference intervals are $[0.45,0.56]$ and $[0.93,0.97]$.

\end{document}